\pdfoutput=1
\documentclass[11pt,a4paper]{article}

\usepackage[T1]{fontenc}
\usepackage[utf8]{inputenc}
\usepackage{lmodern}
\usepackage{microtype}
\usepackage[a4paper,margin=1in]{geometry}

\usepackage{amsmath,amssymb,amsthm,mathtools}

\usepackage{graphicx}
\usepackage{booktabs}
\usepackage{tabularx}
\usepackage{enumitem}
\usepackage{algorithm}
\usepackage{algpseudocode}

\newcolumntype{C}{>{\centering\arraybackslash}X}

\usepackage{authblk}
\usepackage{orcidlink}
\usepackage{xcolor}
\usepackage{hyperref}

\hypersetup{
  colorlinks=true,
  linkcolor=blue,
  citecolor=blue,
  urlcolor=blue,
  pdftitle={
    Ultrametric Convergence of Guarded Automata
    and Applications to Structural Input Validation
  },
  pdfauthor={
    Atanas Ilchev, Hristo Kiskinov,
    George Pashev, Boyan Zlatanov
  }
}

\newcommand{\Lang}{\mathcal{L}}
\newcommand{\one}{\mathbf{1}}

\newtheorem{Theorem}{Theorem}[section]
\newtheorem{Lemma}[Theorem]{Lemma}
\newtheorem{Corollary}[Theorem]{Corollary}
\newtheorem{Proposition}[Theorem]{Proposition}

\theoremstyle{definition}
\newtheorem{Definition}[Theorem]{Definition}
\newtheorem{Example}[Theorem]{Example}

\theoremstyle{remark}
\newtheorem{Remark}[Theorem]{Remark}

\newcommand{\authorcontributions}[1]{%
  \section*{Author Contributions}
  #1
}

\newcommand{\funding}[1]{%
  \section*{Funding}
  #1
}

\newcommand{\dataavailability}[1]{%
  \section*{Data Availability Statement}
  #1
}

\newcommand{\acknowledgments}[1]{%
  \if\relax\detokenize{#1}\relax
  \else
    \section*{Acknowledgments}
    #1
  \fi
}

\newcommand{\conflictsofinterest}[1]{%
  \section*{Conflicts of Interest}
  #1
}

\newcommand{\reftitle}[1]{}

\title{%
  Ultrametric Convergence of Guarded Automata
  and Applications to Structural Input Validation
}

\author[1]{%
  Atanas Ilchev\,\orcidlink{0000-0002-9690-2863}%
  \thanks{%
    Corresponding author:
    \href{mailto:atanasilchev@uni-plovdiv.bg}
    {atanasilchev@uni-plovdiv.bg}
  }%
}

\author[1]{%
  Hristo Kiskinov\,\orcidlink{0000-0002-3912-4866}%
}

\author[2]{%
  George Pashev\,\orcidlink{0000-0001-8148-4737}%
}

\author[1]{%
  Boyan Zlatanov\,\orcidlink{0000-0002-5857-9372}%
}

\affil[1]{%
  Department of Mathematical Analysis,
  Faculty of Mathematics and Informatics,
  Paisii Hilendarski University of Plovdiv,
  4000 Plovdiv, Bulgaria
}

\affil[2]{%
  Department of Computer Informatics,
  Faculty of Mathematics and Informatics,
  Paisii Hilendarski University of Plovdiv,
  4000 Plovdiv, Bulgaria
}

\date{July 2026}

\begin{document}

\maketitle

\begin{abstract}
We equip language-equivalence classes of deterministic finite automata
with a distinguishing-word ultrametric and identify the resulting space
isometrically with the regular languages. This space is incomplete,
while its metric completion is naturally identified with the complete
ultrametric space of all formal languages. Guarded language operators
induce contractions on the automaton space, and their Picard iterates
converge in the completion to the unique language fixed point, which is
represented by a finite automaton exactly when it is regular.
Motivated by structural input validation, we use this framework to
construct depth-capped deterministic finite automata with certified
finite-depth correctness. These automata provide efficient pre-filters
for nested input structures, such as parenthesised SQL parameters,
while avoiding the backtracking risks of regular-expression engines
and the runtime overhead of full context-free parsers. We also outline
a practical WAF pipeline combining learned grammar models,
finite-state construction, and \(O(1)\)-memory runtime validation.
\end{abstract}

\noindent\textbf{Keywords:}
deterministic finite automata;
formal languages;
guarded language operators;
ultrametric spaces;
fixed points;
Picard iteration;
regular languages;
structural input validation;
Web Application Firewalls

\medskip

\noindent\textbf{2020 Mathematics Subject Classification:}
68Q45; 47H10; 54E50; 68Q60; 68R15

\section{Introduction}

Metric and topological methods provide a natural framework for comparing formal languages through the finite words on which they agree or differ. The corresponding length-based ultrametric and Cantor-type topology were studied in the early literature on language spaces~\cite{Vianu1977,FulopKephart2015}. Related metric and fixed-point methods have also been used in the semantics of recursive definitions and domain equations~\cite{deBakkerZucker1982,AmericaRutten1989}, while shortest-distinguishing-input distances appear naturally in quantitative approaches to automata and regular expressions~\cite{BonchiKonigPetrisan2018,Rozowski2024}.

In recent work, guarded language operators \(T_{S,G}(L)=S\cup\bigcup_{r=1}^{p}u_rLv_r\) were shown to be contractions on the complete length-based ultrametric space of formal languages whenever every recursive occurrence is protected by a guard of positive length~\cite{HristovIlchevKulinaZlatanov2026}. This yields a unique fixed point, quantitative Picard convergence, and finite-depth stabilization. The framework was subsequently extended to positive--negative guarded systems involving both direct and complemented language dependencies~\cite{AjetiHristovIlchevZlatanov2026}.

The present paper develops the corresponding automata-theoretic framework. We equip language-equivalence classes of deterministic finite automata with the distinguishing-word ultrametric and identify this space isometrically with the regular languages. We prove that it is dense but not complete and that its metric completion is naturally the complete space of all formal languages. Guarded operators with regular seeds induce contractions on the automaton quotient space, and their Picard iterates form Cauchy sequences converging in the completion to the unique language fixed point.

A fixed point exists inside the automaton space precisely when the limiting language is regular. If the fixed point is nonregular, every finite Picard iterate is nevertheless represented by a deterministic finite automaton and agrees with the limiting language up to an explicitly certified word length. We also consider systems of guarded operators, two-sided approximation, sound iteration from below, state-complexity estimates, and depth-capped finite-state approximations required for the validation application developed later in the paper.

The reliable validation of structured input data remains a fundamental challenge in cybersecurity, particularly in the prevention of injection attacks such as SQL injection (SQLi).
Modern industrial systems typically employ Web Application Firewalls (WAFs) as a first line of defense, or analyze network flow data to detect SQLi payloads~\cite{CrespoMartinez2023}.
However, the underlying mechanisms of these filters present a difficult trade-off between expressiveness, performance, and security.
Currently, the industry relies on two primary approaches:
\begin{enumerate}
    \item \textbf{Regex-based Heuristics:} Most WAFs utilise Perl-Compatible Regular Expressions (PCRE) to detect malicious payloads.
While expressive, PCRE engines employ backtracking algorithms to simulate stack-like memory.
This makes them highly susceptible to Regular Expression Denial of Service (ReDoS) attacks~\cite{CrosbyWallach2003,StaicuPradel2018}, where an attacker crafts a specific payload (e.g., asymmetric nested brackets) that forces the engine into exponential evaluation time.
Furthermore, standard regular expressions are mathematically incapable of accurately parsing arbitrarily nested structures.
    \item \textbf{Full Context-Free Grammar (CFG) Parsers:} To address the shortcomings of Regex, some advanced systems attempt to parse the entire input using Pushdown Automata (PDA) or abstract syntax trees (ASTs), such as through parse tree validation~\cite{Buehrer2005} or runtime monitoring against grammar models~\cite{HalfondOrso2005}.
While structurally accurate, CFG parsers require dynamic memory allocation (stack management) at runtime, leading to $O(N)$ spatial complexity and significant latency bottlenecks, rendering them unsuitable for high-throughput environments where read-heavy operations dominate.
\end{enumerate}

\textbf{Motivation and Approach.} This paper bridges the gap between these two extremes.
Our goal is to achieve the structural awareness of a CFG parser while maintaining the strict $O(N)$ execution time and $O(1)$ memory footprint of a strictly Deterministic Finite Automaton (DFA).
To achieve this, we shift the computational burden from \emph{runtime} to \emph{compile-time}.
We model nested structures using "guarded language operators" and analyze them as contractions within a complete ultrametric space of formal languages.
By deliberately capping the nesting depth (a pragmatic choice, as legitimate application parameters rarely exceed 3 or 4 levels of nesting), we formally guarantee that the context-free language degenerates into a regular language. This concept shares similar motivations with the regular approximation of context-free languages through transformations~\cite{MohriNederhof2001}.
This allows us to pre-compute a minimal DFA that acts as an ultra-fast, ReDoS-immune pre-filter, guaranteeing the rejection of overly complex or malicious payloads without ever relying on runtime backtracking.

The remainder of the paper is organised as follows. Section~2 collects the necessary preliminaries on formal languages, the length-based ultrametric, deterministic finite automata, and guarded language operators. Section~3 presents the principal theoretical results for automata. Section~3.1 introduces the distinguishing-word ultrametric on language-equivalence classes of deterministic finite automata, identifies this space isometrically with the regular languages, determines its metric completion, and proves the contraction, Cauchy convergence, fixed-point, regularity, and finite-depth certification results. Section~3.2 illustrates the theory through regular and nonregular fixed-point examples and gives concrete witnesses for the incompleteness of the automaton space. Section~4 develops generalisations and extensions of the guarded framework, including higher language powers, special seed languages, finite systems of guarded operators, two-sided Picard approximation, soundness of iteration from below, state-complexity estimates, and depth-capped approximations. Section~5 applies the theoretical results to the construction of a fixed-point-based structural pre-filter for SQL parameter validation and describes the corresponding DFA construction and deployment pipeline. Finally, Section~6 discusses the theoretical and practical limitations of the framework and outlines directions for further research.

\section{Preliminaries}

Throughout the paper, $\Sigma$ denotes a fixed finite nonempty alphabet, and $\Sigma^\ast$ denotes the set of all finite words over $\Sigma$, including the empty word $\varepsilon$.
For a word $w\in\Sigma^\ast$, its length is denoted by $|w|$.
We write $\Lang=\mathcal{P}(\Sigma^\ast)$ for the set of all formal languages over $\Sigma$.
If $L\subseteq\Sigma^\ast$, its characteristic function is denoted by $\one_L$.

A formal language over $\Sigma$ is any subset of $\Sigma^\ast$.
For $u,v\in\Sigma^\ast$ and $L\subseteq\Sigma^\ast$, we use the notation
\[
uLv=\{\,uwv:w\in L\,\}.
\]

\begin{Definition}[{\cite{Salomaa1973,Eilenberg1974}}]
Each language $L\in\Lang$ is identified with its characteristic function
\[
\one_L:\Sigma^\ast\to\{0,1\},\qquad
\one_L(w)=
\begin{cases}
1,&w\in L,\\
0,&w\notin L.
\end{cases}
\]
\end{Definition}

We recall the length-based ultrametric on the space of all formal languages.
\begin{Definition}[{\cite{Vianu1977,FulopKephart2015}}]\label{def:base-ultrametric}
For $L,M\in\Lang$, define
\[
d(L,M)=
\begin{cases}
0, & L=M,\\[2mm]
2^{-n}, & n=\min\{\,|w|:\one_L(w)\neq\one_M(w)\,\}.
\end{cases}
\]
\end{Definition}

If $L\neq M$, then there exists at least one word $w\in\Sigma^\ast$ such that $\one_L(w)\neq\one_M(w)$, so the minimum in Definition~\ref{def:base-ultrametric} is taken over a nonempty subset of $\mathbb{N}\cup\{0\}$;
in particular the formula $2^{-n}$ is well defined. The two-case structure of the definition is exhaustive: the case $L=M$ assigns distance zero, and the case $L\neq M$ assigns $2^{-n}$ where $n$ is the length of the shortest distinguishing word.
Thus $d(L,M)$ is defined for all pairs $(L,M)\in\mathcal{L}\times\mathcal{L}$. The distance $d(L,M)$ is determined by the shortest word on which the two languages differ.
In particular, two languages are close when they agree on all words up to a large length.

\begin{Definition}[{\cite{Schikhof}}]\label{def:ultrametric-space}
Let $X$ be a nonempty set. A function $\rho:X\times X\to[0,\infty)$ is called an ultrametric on $X$ if, for all $x,y,z\in X$, the following conditions hold: $\rho(x,y)=0$ if and only if $x=y$;
$\rho(x,y)=\rho(y,x)$; and $\rho(x,z)\le \max\{\rho(x,y),\rho(y,z)\}$. A pair $(X,\rho)$, where $\rho$ is an ultrametric on $X$, is called an ultrametric space.
\end{Definition}

\begin{Theorem}\cite{Vianu1977,FulopKephart2015}\label{thm:known-language-ultrametric}
The function $d$ is an ultrametric on $\Lang$.
\end{Theorem}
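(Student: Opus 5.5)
We verify the three axioms of Definition~\ref{def:ultrametric-space} directly from Definition~\ref{def:base-ultrametric}. It is convenient to introduce, for distinct $L,M\in\Lang$, the quantity
\[
n(L,M)=\min\{\,|w|:\one_L(w)\neq\one_M(w)\,\},
\]
so that $d(L,M)=2^{-n(L,M)}$. This minimum exists by the well-ordering of $\mathbb{N}\cup\{0\}$, as noted after the definition. Everything then reduces to elementary facts about $n(\cdot,\cdot)$.

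\textbf{Identity and symmetry.} Nonnegativity is immediate, since $2^{-n}>0$. If $L=M$, then $d(L,M)=0$ by definition. If $L\neq M$, then $d(L,M)=2^{-n(L,M)}>0$. Hence $d(L,M)=0$ holds exactly when $L=M$. Symmetry follows because the condition $\one_L(w)\neq\one_M(w)$ is symmetric in $L$ and $M$. Thus both the case split and the set defining $n(L,M)$ are invariant under swapping $L$ and $M$.

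\textbf{Strong triangle inequality.} This is the only step with any content, and it is mildly the main obstacle, because the degenerate cases must be handled. If $L=N$, the left side of $d(L,N)\le\max\{d(L,M),d(M,N)\}$ is $0$ and there is nothing to prove. If $L=M$ or $M=N$, the inequality reduces to an equality. So assume $L,M,N$ are pairwise distinct. Set $k=\min\{n(L,M),n(M,N)\}$. For every word $w$ with $|w|<k$, we have $\one_L(w)=\one_M(w)$ and $\one_M(w)=\one_N(w)$, hence $\one_L(w)=\one_N(w)$. Therefore every word distinguishing $L$ from $N$ has length at least $k$, which gives $n(L,N)\ge k$. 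Since $t\mapsto 2^{-t}$ is decreasing, we obtain
\[
d(L,N)=2^{-n(L,N)}\le 2^{-k}=\max\{d(L,M),d(M,N)\}.
\]
Combining the three parts shows that $d$ is an ultrametric on $\Lang$.
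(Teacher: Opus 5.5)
Your proof is correct and complete. Identity and symmetry are immediate. In the strong triangle inequality you handle the degenerate cases properly, and the key step is right: $L$ and $N$ agree on every word shorter than $k=\min\{n(L,M),n(M,N)\}$, so $n(L,N)\ge k$.

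The paper gives no proof of its own here; it imports the statement from Vianu and from Fulop--Kephart. Your argument is the standard direct verification. It uses the same ``agree below the first disagreement length'' reasoning the paper itself employs elsewhere, e.g.\ in the proofs of Proposition~\ref{prop:not-complete} and Theorem~\ref{thm:q-power-guarded-contraction}.

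If you want to shorten it, write $d(L,M)=2^{-\min\{|w|:w\in L\triangle M\}}$ with the conventions $\min\emptyset=\infty$ and $2^{-\infty}=0$. Then the inclusion $L\triangle N\subseteq(L\triangle M)\cup(M\triangle N)$ gives the strong triangle inequality in one line and absorbs all the degenerate cases.
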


\begin{Theorem}[{\cite{Vianu1977,FulopKephart2015}}]\label{thm:known-language-complete}
The ultrametric space $(\Lang,d)$ is complete.
\end{Theorem}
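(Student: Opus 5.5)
The plan is to show that every Cauchy sequence $(L_k)_{k\ge1}$ in $(\Lang,d)$ converges, by building the limit coordinatewise on words. Two consequences of Definition~\ref{def:base-ultrametric} drive the argument. First, for $N\ge 0$,
\[
d(L,M)<2^{-N}\iff \one_L(w)=\one_M(w)\ \text{for all } w\in\Sigma^\ast \text{ with } |w|\le N .
\]
Indeed, $d(L,M)<2^{-N}$ means either $L=M$ or the shortest distinguishing word has length greater than $N$. Second, $\Sigma$ is finite, so for each $N$ there are only finitely many words of length at most $N$. We will need this finiteness only implicitly, since the equivalence above already handles all words of length at most $N$ at once.

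\emph{Step 1 (eventual stabilisation).} Fix $N\ge0$. By the Cauchy property there is an index $K_N$ with $d(L_k,L_m)<2^{-N}$ for all $k,m\ge K_N$. By the equivalence above, for each word $w$ with $|w|\le N$ the value $\one_{L_k}(w)$ is then constant for $k\ge K_N$. We may choose the $K_N$ nondecreasing in $N$.

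\emph{Step 2 (the limit language).} Define
\[
L=\{\,w\in\Sigma^\ast : w\in L_k \text{ for all } k\ge K_{|w|}\,\},
\]
that is, $\one_L(w)$ is the eventual value of $\one_{L_k}(w)$. This is well defined by Step 1, and $L\in\Lang$ because every subset of $\Sigma^\ast$ is a language.

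\emph{Step 3 (convergence).} Let $k\ge K_N$ and let $w$ satisfy $|w|\le N$. Since $K_{|w|}\le K_N\le k$ by monotonicity, we get $\one_{L_k}(w)=\one_L(w)$. So $L_k$ and $L$ agree on all words of length at most $N$, and the equivalence gives $d(L_k,L)<2^{-N}$. Because $N$ was arbitrary, $d(L_k,L)\to0$. Together with Theorem~\ref{thm:known-language-ultrametric}, this shows that $(\Lang,d)$ is a complete ultrametric space.

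The main point needing care is Step 1, namely translating the metric Cauchy condition into uniform agreement on all words up to length $N$. The equivalence above does this directly, so there is no real obstacle: the argument is essentially the completeness of the Cantor space $\{0,1\}^{\Sigma^\ast}$ with its product-type ultrametric.
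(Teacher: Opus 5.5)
Your proof is correct, but it takes a different route from the paper. The paper gives no Cauchy-sequence argument. In the remark after the theorem, it identifies $\Lang$ with $2^{\Sigma^\ast}$ via characteristic functions and cites Fulop--Kephart for the fact that the resulting topology is homeomorphic to the Cantor space, hence compact. It then uses the general fact that a compact metric space is complete.

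You construct the limit directly. The equivalence ($d(L,M)<2^{-N}$ exactly when $L$ and $M$ agree on all words of length at most $N$) turns the Cauchy condition into eventual stabilisation of each $\one_{L_k}(w)$, and the eventual values define the limit language.

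The paper's route is shorter and gives the stronger conclusion that $(\Lang,d)$ is compact. However, it relies on external facts, namely that $d$ induces the product topology and that the Cantor space is compact. It also genuinely needs $\Sigma$ to be finite: for an infinite alphabet the languages $\{a\}$, $a\in\Sigma$, are pairwise at distance $2^{-1}$, so compactness fails.

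Your argument is self-contained. Contrary to your remark that finiteness of $\Sigma$ is used ``implicitly'', it is never used at all, so your proof shows completeness for arbitrary alphabets; you can simply delete that sentence. The monotone choice of the $K_N$ is also only a convenience. For $k\ge K_N$ and $|w|\le N$, you could instead compare $L_k$ with any $L_m$ for $m\ge\max(K_N,K_{|w|})$.
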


\begin{Remark}
The completeness of \((\mathcal{L},d)\) can be justified by identifying
\(\mathcal{L}=\mathcal{P}(\Sigma^\ast)\) with \(2^{\Sigma^\ast}\) via
characteristic functions.
The metric/topological structure of language spaces of this type goes back to the Bodnarchuk metric space introduced by Bodnarchuk~\cite{Bodnarchuk1965} and discussed by Vianu~\cite{Vianu1977}.
More generally, complete metric spaces are a standard tool in computer science for solving domain equations in programming semantics~\cite{AmericaRutten1989}.
Fulop and Kephart~\cite{FulopKephart2015} later describe the corresponding Cantor language topology and recall that it is homeomorphic to the Cantor space;
in particular, the language space is compact. Since every compact metric space is complete, this provides a standard justification for the completeness of \((\mathcal{L},d)\).
\end{Remark}

We also recall Banach's fixed-point theorem.

\begin{Theorem}[{\cite{Banach1922}}]\label{thm:banach}
Let $(X,\rho)$ be a complete metric space and let $T:X\to X$ be a contraction;
that is, there exists $q\in(0,1)$ such that
\[
\rho(Tx,Ty)\le q\rho(x,y)
\qquad\text{for all }x,y\in X.
\]
Then $T$ has a unique fixed point $x^\ast\in X$.
Moreover, for every $x_0\in X$, the Picard iteration $x_{n+1}=Tx_n$ converges to $x^\ast$.
\end{Theorem}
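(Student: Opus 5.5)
The plan is the standard Picard-iteration argument: first show that the iterates form a Cauchy sequence, then use completeness of $(X,\rho)$ to obtain a limit, then show that this limit is a fixed point, and finally prove uniqueness.

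\textbf{Cauchy estimate.} Fix $x_0\in X$ and set $x_{n+1}=Tx_n$. Induction on the contraction inequality gives
\[
\rho(x_{n+1},x_n)\le q^n\rho(x_1,x_0).
\]
For $m>n$, the triangle inequality and summation of the geometric tail give
\[
\rho(x_m,x_n)\le \sum_{k=n}^{m-1}q^k\rho(x_1,x_0)\le \frac{q^n}{1-q}\rho(x_1,x_0).
\]
Since $q\in(0,1)$, the right-hand side tends to $0$ as $n\to\infty$. Hence $(x_n)$ is Cauchy, and completeness of $(X,\rho)$ yields a limit $x^\ast\in X$.

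\textbf{The limit is a fixed point.} A contraction is Lipschitz, hence continuous. Therefore
\[
Tx^\ast=\lim_{n\to\infty} Tx_n=\lim_{n\to\infty} x_{n+1}=x^\ast .
\]
If one prefers to avoid invoking continuity, the estimate
\[
\rho(Tx^\ast,x^\ast)\le \rho(Tx^\ast,Tx_n)+\rho(x_{n+1},x^\ast)\le q\rho(x^\ast,x_n)+\rho(x_{n+1},x^\ast)\to 0
\]
gives the same conclusion directly.

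\textbf{Uniqueness.} Suppose $y^\ast$ is also a fixed point. Then
\[
\rho(x^\ast,y^\ast)=\rho(Tx^\ast,Ty^\ast)\le q\rho(x^\ast,y^\ast).
\]
Since $q<1$, this forces $\rho(x^\ast,y^\ast)=0$, so $x^\ast=y^\ast$. Uniqueness also shows that the limit does not depend on the starting point $x_0$. Consequently, every Picard sequence converges to the same $x^\ast$.

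\textbf{Remarks on difficulty and on the ultrametric case.} The only step needing care is the telescoping Cauchy estimate; the rest is immediate. When $\rho$ is an ultrametric, as in the intended application to $(\Lang,d)$, this step is even simpler. The strong triangle inequality gives
\[
\rho(x_m,x_n)\le \max_{n\le k<m}\rho(x_{k+1},x_k)\le q^n\rho(x_1,x_0),
\]
so no geometric-series summation is needed.
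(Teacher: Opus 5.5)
Your proof is correct. It is the standard Picard-iteration argument: a geometric Cauchy estimate, a limit from completeness, the fixed-point property via continuity or your direct estimate, and uniqueness from $q<1$. It uses $X\neq\emptyset$ only tacitly, to choose $x_0$, which is the usual convention.

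The paper does not prove this theorem; it only cites Banach's 1922 paper, so there is no in-paper proof to compare against. Your ultrametric shortcut $\rho(x_m,x_n)\le\max_{n\le k<m}\rho(x_{k+1},x_k)\le q^n\rho(x_1,x_0)$ is exactly the estimate the paper uses in its proof of Theorem~\ref{thm:automaton-cauchy}.
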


For extensions of this fixed-point principle to generalized metric spaces or partially ordered settings, we refer the reader to~\cite{Bhaskar}.

We now recall the deterministic finite automata used in the sequel.

\begin{Definition}\cite{HopcroftMotwaniUllman2007}
A deterministic finite automaton, or DFA, over $\Sigma$ is a tuple
\[
\mathcal{A}=(Q,\Sigma,\delta,q_0,F),
\]
where $Q$ is a finite nonempty set of states, $\delta:Q\times\Sigma\to Q$ is a total transition function, $q_0\in Q$ is the initial state, and $F\subseteq Q$ is the set of accepting states.
The transition function is extended to $\delta:Q\times\Sigma^\ast\to Q$ in the usual way by
\[
\delta(q,\varepsilon)=q,
\qquad
\delta(q,wa)=\delta(\delta(q,w),a)
\]
for all $q\in Q$, $w\in\Sigma^\ast$, and $a\in\Sigma$.
\end{Definition}

\begin{Definition}\cite{HopcroftMotwaniUllman2007}
The language recognized by a DFA $\mathcal{A}=(Q,\Sigma,\delta,q_0,F)$ is
\[
L(\mathcal{A})=\{\,w\in\Sigma^\ast:\delta(q_0,w)\in F\,\}.
\]
A language $L\subseteq\Sigma^\ast$ is called regular if there exists a DFA $\mathcal{A}$ such that $L=L(\mathcal{A})$.
\end{Definition}

We shall use the standard fact that regular languages are closed under finite union and under concatenation with fixed words;
see, for example, \cite{HopcroftMotwaniUllman2007}. In particular, if \(L\) is regular and \(u,v\in\Sigma^\ast\), then \(uLv\) is regular.

Since distinct deterministic finite automata may recognize the same language, we work with automata up to language equivalence.
The standard notion of equivalence of automata is based on equality of the languages they recognize; see, for example, \cite{HopcroftMotwaniUllman2007}.
We now introduce the corresponding quotient space of deterministic finite automata.

\begin{Definition}
Two DFAs $\mathcal{A}$ and $\mathcal{B}$ over $\Sigma$ are called language equivalent if $L(\mathcal{A})=L(\mathcal{B})$.
We denote by $\mathfrak{A}_\Sigma$ the set of equivalence classes of DFAs over $\Sigma$ under language equivalence, and we write $[\mathcal{A}]$ for the equivalence class of an automaton $\mathcal{A}$.
\end{Definition}

For \([\mathcal{A}]\in\mathfrak{A}_\Sigma\), we write \(L([\mathcal{A}]):=L(\mathcal{A})\). This notation is well defined, since all representatives of the same equivalence class recognize the same language.

We finally recall the class of guarded language operators introduced in \cite{HristovIlchevKulinaZlatanov2026}.

\begin{Definition}\cite{HristovIlchevKulinaZlatanov2026} \label{d2.11}
Let $S\subseteq\Sigma^\ast$, let $p\ge 1$, and let
\[
G=\{(u_r,v_r):r=1,\dots,p\}\subseteq\Sigma^\ast\times\Sigma^\ast
\]
be a finite nonempty family of guards.
The corresponding guarded language operator is the map $T_{S,G}:\Lang\to\Lang$ defined by
\[
T_{S,G}(L)=S\cup\bigcup_{r=1}^p u_rLv_r.
\]
The family $G$ is said to have guard length at least $m\ge 1$ if
\[
\min_{1\le r\le p}(|u_r|+|v_r|)\ge m.
\]
\end{Definition}

The following result is the main fixed-point theorem for guarded language operators from \cite{HristovIlchevKulinaZlatanov2026}.
It will be used as the language-level foundation for the automaton results proved in the present paper.

\begin{Theorem}{\cite{HristovIlchevKulinaZlatanov2026}}\label{thm:known-guarded-language}
Let $S\subseteq\Sigma^\ast$, and let $G=\{(u_r,v_r):r=1,\dots,p\}\subseteq\Sigma^\ast\times\Sigma^\ast$ be a finite nonempty family of guards with guard length at least $m\ge 1$.
Then the operator $T_{S,G}:\Lang\to\Lang$ is a contraction on $(\Lang,d)$ with contraction factor $2^{-m}$.
Consequently, $T_{S,G}$ has a unique fixed point $L^\ast\in\Lang$, and for every initial language $L^{(0)}\in\Lang$, the Picard iteration
\[
L^{(n+1)}=T_{S,G}(L^{(n)})
\]
converges to $L^\ast$.
Moreover,
\[
d(L^{(n)},L^\ast)\le 2^{-mn}d(L^{(0)},L^\ast)
\qquad\text{for all }n\ge 0.
\]
\end{Theorem}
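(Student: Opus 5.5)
The plan is to prove the contraction estimate first. The fixed-point and convergence statements will then follow from Theorem~\ref{thm:banach} together with the completeness of $(\Lang,d)$ in Theorem~\ref{thm:known-language-complete}. Fix $L,M\in\Lang$ with $L\neq M$, and let $n$ be the length of a shortest word on which $\one_L$ and $\one_M$ differ, so that $d(L,M)=2^{-n}$. The goal is to show that $T_{S,G}(L)$ and $T_{S,G}(M)$ agree on every word of length less than $n+m$. This gives $d(T_{S,G}(L),T_{S,G}(M))\le 2^{-(n+m)}=2^{-m}d(L,M)$. The case $L=M$ is trivial.

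For the key step, take $x$ with $|x|<n+m$ and suppose $x\in T_{S,G}(L)$. If $x\in S$, then $x\in T_{S,G}(M)$ immediately. Otherwise $x=u_rwv_r$ for some $r$ and some $w\in L$. Then
\[
|w|=|x|-(|u_r|+|v_r|)\le |x|-m<n,
\]
so $\one_L(w)=\one_M(w)$. Hence $w\in M$, and therefore $x\in u_rMv_r\subseteq T_{S,G}(M)$. The argument is symmetric in $L$ and $M$, so the two images agree on all words of length below $n+m$. This is the only genuinely substantive step. The one point needing care is that a single word $x$ may admit several decompositions $u_rwv_r$; this causes no difficulty, because membership only requires the existence of one such decomposition, and we transfer the witness $w$ from $L$ to $M$.

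With contraction factor $q=2^{-m}\in(0,1)$, which uses $m\ge 1$, Theorem~\ref{thm:banach} applied in the complete space $(\Lang,d)$ gives a unique fixed point $L^\ast$ and convergence of Picard iterates from any $L^{(0)}$. For the quantitative bound we induct on $n$, using $T_{S,G}(L^\ast)=L^\ast$:
\[
d(L^{(n+1)},L^\ast)=d(T_{S,G}(L^{(n)}),T_{S,G}(L^\ast))\le 2^{-m}d(L^{(n)},L^\ast).
\]
This yields $d(L^{(n)},L^\ast)\le 2^{-mn}d(L^{(0)},L^\ast)$. No step beyond the length-counting argument should present an obstacle.
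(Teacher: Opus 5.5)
Your proof is correct and is essentially the paper's argument. The paper only cites this result, but its proof of Theorem~\ref{thm:q-power-guarded-contraction} specialises at $q=1$ to exactly your steps: the length count transferring the witness $w$ from $L$ to $M$ because $|w|\le |x|-m<n$, the seed case, Banach's theorem, and iterating the contraction against $T_{S,G}(L^\ast)=L^\ast$. The only cosmetic difference is that the paper treats $n=0$ as a separate case, whereas your inequality covers it uniformly, since no decomposition $x=u_rwv_r$ with $|x|<m$ can exist.
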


\begin{Remark}\label{rem:epsilon-contraction}
A potential concern is whether the empty word $\varepsilon$ could distinguish
$T_{S,G}(L)$ from $T_{S,G}(M)$ and thereby break the contraction estimate.
This cannot occur: $\varepsilon\in T_{S,G}(L)$ if and only if $\varepsilon\in S$
or $\varepsilon\in u_r L v_r$ for some $r$, the latter requiring $u_r=v_r=\varepsilon$
and $\varepsilon\in L$.
Since all guards have length $|u_r|+|v_r|\ge m\ge 1$, the condition $u_r=v_r=\varepsilon$
is excluded.
Therefore $\varepsilon\in T_{S,G}(L)$ if and only if $\varepsilon\in S$,
independently of $L$.
Hence $\varepsilon$ can never distinguish $T_{S,G}(L)$ from
$T_{S,G}(M)$, and the contraction argument is unaffected by the $\varepsilon$ case.
\end{Remark}

%%%%%%%%%%%%%%%%%%%%%%%%%%%%%%%%%%%%

\section{Ultrametric Fixed-Point Theory for Guarded Automata}
\label{sec:main}

\subsection{The Distinguishing-Word Ultrametric and Its Completion}

\begin{Definition}\label{def:automaton-metric}
For $[\mathcal{A}],[\mathcal{B}]\in\mathfrak{A}_\Sigma$, define
\[
d_{\mathrm{aut}}([\mathcal{A}],[\mathcal{B}])=
\begin{cases}
0, & L([\mathcal{A}])=L([\mathcal{B}]),\\[2mm]
2^{-n}, & n=\min\{\,|w|:\one_{L([\mathcal{A}])}(w)\neq \one_{L([\mathcal{B}])}(w)\,\}.
\end{cases}
\]
\end{Definition}
Note that if the shortest distinguishing word is the empty word $\varepsilon$, then $n=0$ and the distance evaluates to $2^{-0} = 1$.

The quantity $d_{\mathrm{aut}}([\mathcal{A}],[\mathcal{B}])$ measures the length of the shortest input word on which the two automata differ in acceptance.
It depends only on the recognized languages and is therefore well defined on equivalence classes.

\begin{Lemma}\label{lem:automaton-ultrametric}
The function $d_{\mathrm{aut}}$ is an ultrametric on $\mathfrak{A}_\Sigma$.
\end{Lemma}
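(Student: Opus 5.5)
The plan is to reduce the statement to Theorem~\ref{thm:known-language-ultrametric} by transporting the language ultrametric along the map
\[
\Phi:\mathfrak{A}_\Sigma\to\Lang,\qquad \Phi([\mathcal{A}])=L([\mathcal{A}]).
\]

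First, $\Phi$ is well defined, because all representatives of a class recognize the same language. By Definition~\ref{def:automaton-metric} and Definition~\ref{def:base-ultrametric}, the two metrics agree:
\[
d_{\mathrm{aut}}([\mathcal{A}],[\mathcal{B}])=d\bigl(\Phi([\mathcal{A}]),\Phi([\mathcal{B}])\bigr).
\]
The case distinctions and the minima coincide literally, so in particular $d_{\mathrm{aut}}$ is well defined and takes values in $\{0\}\cup\{2^{-n}:n\ge 0\}\subseteq[0,\infty)$.

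Next, I check that $\Phi$ is injective. If $\Phi([\mathcal{A}])=\Phi([\mathcal{B}])$, then $\mathcal{A}$ and $\mathcal{B}$ are language equivalent, so $[\mathcal{A}]=[\mathcal{B}]$.

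The pullback of an ultrametric along an injective map is again an ultrametric, so each axiom of Definition~\ref{def:ultrametric-space} follows from Theorem~\ref{thm:known-language-ultrametric}:
\begin{itemize}
\item \emph{Identity of indiscernibles.} We have $d_{\mathrm{aut}}([\mathcal{A}],[\mathcal{B}])=0$ iff $\Phi([\mathcal{A}])=\Phi([\mathcal{B}])$, which by injectivity holds iff $[\mathcal{A}]=[\mathcal{B}]$.
\item \emph{Symmetry.} This is inherited directly from the symmetry of $d$.
\item \emph{Strong triangle inequality.} This is the inequality for $d$ applied to the three languages $L(\mathcal{A})$, $L(\mathcal{B})$, $L(\mathcal{C})$.
\end{itemize}

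The only point needing care is identity of indiscernibles. It relies on the quotient by language equivalence; on raw DFAs, $d_{\mathrm{aut}}$ would only be a pseudo-ultrametric.

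For self-containedness, I would also sketch the strong triangle inequality directly. Suppose $L(\mathcal{A})\neq L(\mathcal{C})$, and let $w$ be a shortest word distinguishing them, with $|w|=n$. Then $w$ must distinguish $L(\mathcal{A})$ from $L(\mathcal{B})$, or $L(\mathcal{B})$ from $L(\mathcal{C})$; otherwise $\one_{L(\mathcal{A})}(w)=\one_{L(\mathcal{B})}(w)=\one_{L(\mathcal{C})}(w)$. Hence one of those two distances is at least $2^{-n}$. The case $L(\mathcal{A})=L(\mathcal{C})$ is trivial, since then the left-hand side is $0$.
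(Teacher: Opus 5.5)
Your proposal is correct and follows essentially the same route as the paper: both establish well-definedness and then pull the axioms back along the map $[\mathcal{A}]\mapsto L([\mathcal{A}])$ using $d_{\mathrm{aut}}([\mathcal{A}],[\mathcal{B}])=d\bigl(L([\mathcal{A}]),L([\mathcal{B}])\bigr)$. In particular, both obtain the strong triangle inequality from Theorem~\ref{thm:known-language-ultrametric} and the identity of indiscernibles from the quotient by language equivalence; your extra self-contained shortest-distinguishing-word argument for the strong triangle inequality is also valid.
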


\begin{proof}
We first show that $d_{\mathrm{aut}}$ is well defined.
Let $[\mathcal{A}],[\mathcal{B}]\in\mathfrak{A}_\Sigma$, and suppose that $\mathcal{A}'\in[\mathcal{A}]$ and $\mathcal{B}'\in[\mathcal{B}]$ are different representatives of the same equivalence classes.
By the definition of language equivalence, we have $L(\mathcal{A}')=L(\mathcal{A})$ and $L(\mathcal{B}')=L(\mathcal{B})$.
Hence the set of words on which $\mathcal{A}$ and $\mathcal{B}$ differ in acceptance is the same as the set of words on which $\mathcal{A}'$ and $\mathcal{B}'$ differ in acceptance.
Therefore the value of $d_{\mathrm{aut}}([\mathcal{A}],[\mathcal{B}])$ does not depend on the chosen representatives.

Non-negativity is immediate, since by definition $d_{\mathrm{aut}}$ takes only the values $0$ and $2^{-n}$ for integers $n\ge 0$.
Symmetry is also immediate, because the condition $\one_{L([\mathcal{A}])}(w)\neq \one_{L([\mathcal{B}])}(w)$ is symmetric in $[\mathcal{A}]$ and $[\mathcal{B}]$.
We next verify the identity of indiscernibles. If $[\mathcal{A}]=[\mathcal{B}]$, then $L([\mathcal{A}])=L([\mathcal{B}])$, and therefore $d_{\mathrm{aut}}([\mathcal{A}],[\mathcal{B}])=0$.
Conversely, if $d_{\mathrm{aut}}([\mathcal{A}],[\mathcal{B}])=0$, then by the definition of $d_{\mathrm{aut}}$ we have $L([\mathcal{A}])=L([\mathcal{B}])$.
Hence $\mathcal{A}$ and $\mathcal{B}$ are language equivalent, and therefore $[\mathcal{A}]=[\mathcal{B}]$.

It remains to prove the strong triangle inequality.
Define $\iota:\mathfrak{A}_\Sigma\to\Lang$ by $\iota([\mathcal{A}])=L([\mathcal{A}])$. This map is well defined because all representatives of the same automaton class recognize the same language.
By the definition of $d_{\mathrm{aut}}$ and by the definition of the language ultrametric $d$, for all $[\mathcal{A}],[\mathcal{B}]\in\mathfrak{A}_\Sigma$ we have $d_{\mathrm{aut}}([\mathcal{A}],[\mathcal{B}])=d(\iota([\mathcal{A}]),\iota([\mathcal{B}]))$.
Now let $[\mathcal{A}],[\mathcal{B}],[\mathcal{C}]\in\mathfrak{A}_\Sigma$ be arbitrary. Since $d$ is an ultrametric on $\Lang$ by Theorem~\ref{thm:known-language-ultrametric}, we have $d(\iota([\mathcal{A}]),\iota([\mathcal{C}]))\le \max\{d(\iota([\mathcal{A}]),\iota([\mathcal{B}])),d(\iota([\mathcal{B}]),\iota([\mathcal{C}]))\}$.
Using the equality between $d_{\mathrm{aut}}$ and the pullback of $d$ under $\iota$, this becomes $d_{\mathrm{aut}}([\mathcal{A}],[\mathcal{C}])\le \max\{d_{\mathrm{aut}}([\mathcal{A}],[\mathcal{B}]),d_{\mathrm{aut}}([\mathcal{B}],[\mathcal{C}])\}$.
Thus $d_{\mathrm{aut}}$ satisfies non-negativity, symmetry, the identity of indiscernibles, and the strong triangle inequality.
Therefore $d_{\mathrm{aut}}$ is an ultrametric on $\mathfrak{A}_\Sigma$.
\end{proof}

\begin{Proposition}\label{prop:isometric-embedding}
The map $\iota:\mathfrak{A}_\Sigma\to\Lang$, defined by $\iota([\mathcal{A}])=L([\mathcal{A}])$, is an isometric embedding of $\mathfrak{A}_\Sigma$ onto the subspace $\Lang_{\mathrm{reg}}:=\{\,L\subseteq\Sigma^\ast : L \text{ is regular}\,\}$.
\end{Proposition}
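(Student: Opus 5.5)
The plan is to verify three properties of $\iota$ in turn: it is well defined, it preserves distances (which already forces injectivity), and its image is exactly $\Lang_{\mathrm{reg}}$.

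\textbf{Well-definedness.} This was already observed in the proof of Lemma~\ref{lem:automaton-ultrametric}. All representatives of a class $[\mathcal{A}]$ recognize the same language, so $\iota([\mathcal{A}])=L([\mathcal{A}])$ does not depend on the representative chosen.

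\textbf{Isometry.} I will compare Definition~\ref{def:automaton-metric} with Definition~\ref{def:base-ultrametric} directly. For classes $[\mathcal{A}],[\mathcal{B}]$, both definitions split into the same two cases. If $L([\mathcal{A}])=L([\mathcal{B}])$, both distances are $0$. Otherwise, both distances are $2^{-n}$, where $n$ is the length of a shortest word distinguishing the two characteristic functions. Hence
\[
d_{\mathrm{aut}}([\mathcal{A}],[\mathcal{B}])=d(\iota([\mathcal{A}]),\iota([\mathcal{B}]))
\]
for all pairs of classes. This identity was also recorded inside the proof of Lemma~\ref{lem:automaton-ultrametric}, so I will simply cite it. Injectivity then follows at once: if $\iota([\mathcal{A}])=\iota([\mathcal{B}])$, then $d_{\mathrm{aut}}([\mathcal{A}],[\mathcal{B}])=0$, and Lemma~\ref{lem:automaton-ultrametric} gives $[\mathcal{A}]=[\mathcal{B}]$. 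More directly, equal languages mean the automata are language equivalent by definition.

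\textbf{Image.} First, $\iota(\mathfrak{A}_\Sigma)\subseteq\Lang_{\mathrm{reg}}$, because $L(\mathcal{A})$ is regular by the definition of regularity. Conversely, let $L\in\Lang_{\mathrm{reg}}$. By definition there is a DFA $\mathcal{A}$ with $L(\mathcal{A})=L$, and then $\iota([\mathcal{A}])=L$. So $\iota$ is surjective onto $\Lang_{\mathrm{reg}}$.

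Putting these together, $\iota$ is a distance-preserving bijection from $(\mathfrak{A}_\Sigma,d_{\mathrm{aut}})$ onto $\Lang_{\mathrm{reg}}$, equipped with the restriction of $d$. That is precisely an isometric embedding onto this subspace.

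There is no substantial obstacle here. The only point requiring care is to state explicitly that $\Lang_{\mathrm{reg}}$ carries the metric $d|_{\Lang_{\mathrm{reg}}\times\Lang_{\mathrm{reg}}}$. This ensures that "isometric onto the subspace" is read with the induced metric, which by Theorem~\ref{thm:known-language-ultrametric} is an ultrametric.
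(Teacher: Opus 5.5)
Your proposal is correct and follows essentially the same route as the paper's proof. Both arguments check well-definedness, injectivity, both inclusions for the image, and distance preservation by the two-case comparison of Definitions~\ref{def:automaton-metric} and~\ref{def:base-ultrametric}; the only cosmetic difference is that you obtain injectivity from the distance identity and Lemma~\ref{lem:automaton-ultrametric}, whereas the paper argues directly from language equivalence (which you also note).
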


\begin{proof}
We first prove that the map $\iota$ is well defined. Let $[\mathcal{A}]\in\mathfrak{A}_\Sigma$.
By definition, $[\mathcal{A}]$ is an equivalence class of deterministic finite automata under language equivalence.
Suppose that $\mathcal{A}'$ is another representative of the same class, that is, $[\mathcal{A}']=[\mathcal{A}]$.
Then $\mathcal{A}'$ and $\mathcal{A}$ are language equivalent, so $L(\mathcal{A}')=L(\mathcal{A})$. Therefore the value assigned by $\iota$ does not depend on the chosen representative of the class.
Hence $\iota([\mathcal{A}])=L([\mathcal{A}])$ is well defined.

Next we prove that $\iota$ is injective. Let $[\mathcal{A}],[\mathcal{B}]\in\mathfrak{A}_\Sigma$ and assume that $\iota([\mathcal{A}])=\iota([\mathcal{B}])$.
By the definition of $\iota$, this means that $L([\mathcal{A}])=L([\mathcal{B}])$. Equivalently, $L(\mathcal{A})=L(\mathcal{B})$. Hence $\mathcal{A}$ and $\mathcal{B}$ are language equivalent.
By the definition of the equivalence relation on deterministic finite automata, this implies $[\mathcal{A}]=[\mathcal{B}]$. Thus $\iota$ is injective.

We now prove that the image of $\iota$ is exactly $\Lang_{\mathrm{reg}}$. First, let $[\mathcal{A}]\in\mathfrak{A}_\Sigma$.
Since $\mathcal{A}$ is a DFA, the language $L(\mathcal{A})$ recognized by $\mathcal{A}$ is regular. Hence $\iota([\mathcal{A}])=L([\mathcal{A}])=L(\mathcal{A})\in\Lang_{\mathrm{reg}}$. Therefore $\iota(\mathfrak{A}_\Sigma)\subseteq\Lang_{\mathrm{reg}}$.

Conversely, let $L\in\Lang_{\mathrm{reg}}$.
By the definition of a regular language, there exists a DFA$\mathcal{A}$ over $\Sigma$ such that $L(\mathcal{A})=L$.
Then $[\mathcal{A}]\in\mathfrak{A}_\Sigma$, and by the definition of $\iota$ we have $\iota([\mathcal{A}])=L([\mathcal{A}])=L(\mathcal{A})=L$.
Hence every regular language belongs to the image of $\iota$, so $\Lang_{\mathrm{reg}}\subseteq\iota(\mathfrak{A}_\Sigma)$. Combining the two inclusions, we obtain $\iota(\mathfrak{A}_\Sigma)=\Lang_{\mathrm{reg}}$.

It remains to prove that $\iota$ preserves distances. Let $[\mathcal{A}],[\mathcal{B}]\in\mathfrak{A}_\Sigma$ be arbitrary. We consider two cases.

First, suppose that $[\mathcal{A}]=[\mathcal{B}]$.
Then $L([\mathcal{A}])=L([\mathcal{B}])$, and by Definition~\ref{def:automaton-metric} we have $d_{\mathrm{aut}}([\mathcal{A}],[\mathcal{B}])=0$. On the other hand, $\iota([\mathcal{A}])=\iota([\mathcal{B}])$, so $d(\iota([\mathcal{A}]),\iota([\mathcal{B}]))=0$. Hence in this case $d_{\mathrm{aut}}([\mathcal{A}],[\mathcal{B}])=d(\iota([\mathcal{A}]),\iota([\mathcal{B}]))$.

Second, suppose that $[\mathcal{A}]\neq[\mathcal{B}]$. Then $L([\mathcal{A}])\neq L([\mathcal{B}])$. By the definition of $d_{\mathrm{aut}}$, the distance $d_{\mathrm{aut}}([\mathcal{A}],[\mathcal{B}])$ is equal to $2^{-n}$, where $n$ is the minimum length of a word $w\in\Sigma^\ast$ such that $\one_{L([\mathcal{A}])}(w)\neq\one_{L([\mathcal{B}])}(w)$.
Since $\iota([\mathcal{A}])=L([\mathcal{A}])$ and $\iota([\mathcal{B}])=L([\mathcal{B}])$, the same word $w$ is precisely a distinguishing word between the languages $\iota([\mathcal{A}])$ and $\iota([\mathcal{B}])$.
Therefore, by the definition of the language ultrametric $d$, we also have $d(\iota([\mathcal{A}]),\iota([\mathcal{B}]))=2^{-n}$. Hence again $d_{\mathrm{aut}}([\mathcal{A}],[\mathcal{B}])=d(\iota([\mathcal{A}]),\iota([\mathcal{B}]))$.

Thus, for all $[\mathcal{A}],[\mathcal{B}]\in\mathfrak{A}_\Sigma$, we have $d_{\mathrm{aut}}([\mathcal{A}],[\mathcal{B}])=d(\iota([\mathcal{A}]),\iota([\mathcal{B}]))$. Therefore $\iota$ is distance preserving.
Since we have already proved that $\iota$ is injective and that its image is exactly $\Lang_{\mathrm{reg}}$, it follows that $\iota$ is an isometric embedding of $\mathfrak{A}_\Sigma$ onto $\Lang_{\mathrm{reg}}$.
\end{proof}

The automaton space is naturally identified with the regular languages.
The next result shows that this space is not complete with respect to the distinguishing-word ultrametric.

\begin{Proposition}\label{prop:not-complete}
The ultrametric space $(\mathfrak{A}_\Sigma,d_{\mathrm{aut}})$ is not complete.
\end{Proposition}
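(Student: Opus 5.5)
By Proposition~\ref{prop:isometric-embedding}, $(\mathfrak{A}_\Sigma,d_{\mathrm{aut}})$ is isometric to $(\Lang_{\mathrm{reg}},d)$. It therefore suffices to exhibit a Cauchy sequence of regular languages whose $d$-limit in $\Lang$ is nonregular. Since $(\Lang,d)$ is complete by Theorem~\ref{thm:known-language-complete}, such a sequence converges in $\Lang$ to a unique limit. If it converged in $\Lang_{\mathrm{reg}}$, the limit there would also be its limit in $\Lang$, and so would be regular. That contradiction proves incompleteness.

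For the construction, fix a letter $a\in\Sigma$. Let $L^\ast$ be any nonregular language, for instance $L^\ast=\{a^{k^2}:k\ge 0\}$. When $|\Sigma|\ge 2$ one could instead take $\{a^kb^k:k\ge0\}$, but the unary choice works for every nonempty $\Sigma$. Define
\[
L_n=\{\,w\in L^\ast : |w|\le n\,\}.
\]
Each $L_n$ is finite and hence regular, because a DFA with a trie of states up to depth $n$ plus a sink recognizes it. By Proposition~\ref{prop:isometric-embedding}, each $L_n$ equals $\iota([\mathcal{A}_n])$ for some DFA class $[\mathcal{A}_n]$.

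Next, for $m\ge n$ the languages $L_m$ and $L_n$ agree on all words of length at most $n$, so $d(L_m,L_n)\le 2^{-(n+1)}$. The sequence is therefore Cauchy. In the same way $d(L_n,L^\ast)\le 2^{-(n+1)}\to 0$, so $L_n\to L^\ast$ in $(\Lang,d)$.

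The one substantive step is to verify that $L^\ast$ is nonregular; everything else is bookkeeping with the definitions. For $\{a^{k^2}\}$ I would apply the pumping lemma: pumping $a^{N^2}$ by a block of length $1\le j\le N$ gives length $N^2+j$, which lies strictly between $N^2$ and $(N+1)^2$. Alternatively, I could argue via Myhill--Nerode that the gaps between consecutive squares grow without bound, whereas a regular unary language is ultimately periodic.

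To finish, suppose $[\mathcal{A}_n]\to[\mathcal{B}]$ in $\mathfrak{A}_\Sigma$. Applying the isometry $\iota$ gives $L_n\to L(\mathcal{B})$ in $\Lang$. Uniqueness of limits in the metric space $(\Lang,d)$ then forces $L(\mathcal{B})=L^\ast$, which is nonregular. This is a contradiction, so $(\mathfrak{A}_\Sigma,d_{\mathrm{aut}})$ is not complete.
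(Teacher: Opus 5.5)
Your proof is correct and follows essentially the same route as the paper: the same nonregular language $\{a^{k^2}:k\ge 0\}$ and the same pumping argument, finite length-truncations as a Cauchy sequence of DFA classes, and the same uniqueness-of-limits contradiction transported through the isometry $\iota$. In the pumping step, say explicitly that $N$ is chosen at least as large as the pumping length; that is what guarantees $j\le N$, and the paper does the same by taking $r>P$.
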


\begin{proof}
Since $\Sigma$ is nonempty, choose a letter $a\in\Sigma$. Consider the language $K=\{a^{n^2}:n\ge 0\}$.
We first show that $K$ is not regular. Indeed, suppose for contradiction that $K$ is regular, and let $P$ be its pumping length.
Choose an integer $r>P$ and consider the word $a^{r^2}\in K$.
By the pumping lemma, there exist words $x,y,z\in\Sigma^\ast$ such that $a^{r^2}=xyz$, $|xy|\le P$, $|y|\ge 1$, and $xy^iz\in K$ for all $i\ge 0$.
Since $a^{r^2}$ is a unary word, we have $y=a^t$ for some integer $t$ with $1\le t\le P$.
Pumping once gives $xy^2z=a^{r^2+t}$. Because $1\le t\le P<r$, we have $r^2<r^2+t<r^2+r<(r+1)^2$. Hence $r^2+t$ is not a square, so $a^{r^2+t}\notin K$.
This contradicts the pumping lemma. Therefore $K$ is not regular.

For each integer $N\ge 1$, define $K_N:=K\cap\{\,w\in\Sigma^\ast:|w|<N\,\}$.
Each language $K_N$ is finite, because there are only finitely many words of length strictly less than $N$ over the finite alphabet $\Sigma$.
Hence each $K_N$ is regular. Therefore, for every $N\ge 1$, there exists a DFA $\mathcal{A}_N$ such that $L(\mathcal{A}_N)=K_N$.
Let $[\mathcal{A}_N]\in\mathfrak{A}_\Sigma$ be its equivalence class.

We prove that the sequence $([\mathcal{A}_N])_{N\ge 1}$ is Cauchy in $(\mathfrak{A}_\Sigma,d_{\mathrm{aut}})$.
Let $\varepsilon>0$ be arbitrary. Choose an integer $N_0$ such that $2^{-N_0}<\varepsilon$. Now let $M,N\ge N_0$.
By the definition of $K_M$ and $K_N$, both languages agree with $K$ on all words of length strictly less than $N_0$.
Therefore $K_M$ and $K_N$ agree on all words of length strictly less than $N_0$.
Hence the first possible disagreement between $K_M$ and $K_N$, if any, can occur only at length at least $N_0$.
By the definition of $d_{\mathrm{aut}}$, this gives $d_{\mathrm{aut}}([\mathcal{A}_M],[\mathcal{A}_N])\le 2^{-N_0}<\varepsilon$. Thus $([\mathcal{A}_N])_{N\ge 1}$ is a Cauchy sequence in $(\mathfrak{A}_\Sigma,d_{\mathrm{aut}})$.

It remains to show that this Cauchy sequence has no limit in $\mathfrak{A}_\Sigma$.
Suppose, for contradiction, that there exists $[\mathcal{A}]\in\mathfrak{A}_\Sigma$ such that $[\mathcal{A}_N]\to[\mathcal{A}]$ in $d_{\mathrm{aut}}$.
By Proposition~\ref{prop:isometric-embedding}, the map $\iota([\mathcal{B}])=L([\mathcal{B}])$ is an isometry from $\mathfrak{A}_\Sigma$ onto $\Lang_{\mathrm{reg}}$.
Therefore the convergence $[\mathcal{A}_N]\to[\mathcal{A}]$ implies $K_N=L([\mathcal{A}_N])\to L([\mathcal{A}])$ in $(\Lang,d)$.

On the other hand, the sequence $(K_N)$ converges to $K$ in $(\Lang,d)$.
Indeed, for every $N\ge 1$, the languages $K_N$ and $K$ agree on all words of length strictly less than $N$, because $K_N$ is exactly the restriction of $K$ to words of length less than $N$.
Hence $d(K_N,K)\le 2^{-N}$, and therefore $d(K_N,K)\to 0$ as $N\to\infty$. Thus $(K_N)$ has the limit $K$ in $(\Lang,d)$.
Since limits in a metric space are unique, we obtain $L([\mathcal{A}])=K$.
But $L([\mathcal{A}])$ is recognized by a DFA and is therefore regular, while $K$ is not regular.
This contradiction shows that the Cauchy sequence $([\mathcal{A}_N])_{N\ge 1}$ does not converge in $\mathfrak{A}_\Sigma$. Consequently, $(\mathfrak{A}_\Sigma,d_{\mathrm{aut}})$ is not complete.
\end{proof}

\begin{Proposition}\label{prop:dense}
The subspace $\Lang_{\mathrm{reg}}$ is dense in $(\Lang,d)$. Consequently, the completion of $(\mathfrak{A}_\Sigma,d_{\mathrm{aut}})$ can be naturally identified with $(\Lang,d)$.
\end{Proposition}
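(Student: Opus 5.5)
The plan is to prove density by truncation, exactly as in the proof of Proposition~\ref{prop:not-complete}, and then obtain the completion statement from the uniqueness of metric completions.

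\textbf{Density.} Fix an arbitrary $L\in\Lang$ and, for $N\ge 1$, set
\[
L_N:=L\cap\{\,w\in\Sigma^\ast:|w|<N\,\}.
\]
Since $\Sigma$ is finite, there are only finitely many words of length less than $N$, so $L_N$ is finite. A finite language is regular: a DFA recognizing it can be built as a trie over the words of $L_N$, completed with a sink state. Hence $L_N\in\Lang_{\mathrm{reg}}$.

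The languages $L_N$ and $L$ agree on every word of length less than $N$. Therefore either $L_N=L$, or the shortest distinguishing word has length at least $N$. In both cases $d(L_N,L)\le 2^{-N}$, so $L_N\to L$ in $(\Lang,d)$. Thus every point of $\Lang$ is a limit of regular languages, i.e.\ $\Lang_{\mathrm{reg}}$ is dense. Equivalently, for every $L$ and every $\varepsilon>0$, choosing $N$ with $2^{-N}<\varepsilon$ places a regular language in the $\varepsilon$-ball around $L$.

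\textbf{Completion.} By Proposition~\ref{prop:isometric-embedding}, the map $\iota$ is an isometry of $(\mathfrak{A}_\Sigma,d_{\mathrm{aut}})$ onto $\Lang_{\mathrm{reg}}\subseteq\Lang$. By Theorem~\ref{thm:known-language-complete}, $(\Lang,d)$ is complete. Together with density, this means $(\Lang,d)$ is a complete metric space containing an isometric copy of $\mathfrak{A}_\Sigma$ as a dense subset. That is exactly the defining property of a metric completion.

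The uniqueness theorem for completions then gives the identification: any two completions of a metric space are isometric via a unique isometry extending the identity on the dense copies. So the completion of $(\mathfrak{A}_\Sigma,d_{\mathrm{aut}})$ is identified with $(\Lang,d)$ through the unique isometric extension of $\iota$. Concretely, this extension sends the class of a Cauchy sequence $([\mathcal{A}_N])$ to $\lim_N L(\mathcal{A}_N)$ in $(\Lang,d)$. This limit exists by completeness and is independent of the representative sequence, because $\iota$ is an isometry.

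There is no real obstacle. The only points that need explicit justification are that finite languages are regular (the trie construction above) and that the identification is canonical, meaning it extends $\iota$ rather than being an arbitrary isometry.
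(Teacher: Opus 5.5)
Your proposal is correct and takes essentially the same route as the paper. Truncations $L\cap\{w:|w|<N\}$ are finite, hence regular, and lie within distance $2^{-N}$ of $L$; the isometry $\iota$ of Proposition~\ref{prop:isometric-embedding}, together with density and the completeness of $(\Lang,d)$, then identifies $(\Lang,d)$ as the completion. Your additions (the trie construction for finite languages and the explicit description of the isometric extension of $\iota$ on Cauchy sequences) only make explicit what the paper leaves implicit.
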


\begin{proof}
We first prove that $\Lang_{\mathrm{reg}}$ is dense in $(\Lang,d)$. Let $L\subseteq\Sigma^\ast$ be arbitrary.
We have to show that, for every accuracy level, there exists a regular language which is sufficiently close to $L$ in the metric $d$.
Let $N\ge 1$ be fixed, and define $L^{<N}:=L\cap\{\,w\in\Sigma^\ast:|w|<N\,\}$. Since $\Sigma$ is finite, there are only finitely many words over $\Sigma$ of length strictly less than $N$.
Hence the set $\{\,w\in\Sigma^\ast:|w|<N\,\}$ is finite, and therefore $L^{<N}$ is also finite. Every finite language is regular, so $L^{<N}\in\Lang_{\mathrm{reg}}$.
By construction, $L^{<N}$ contains exactly those words of $L$ whose length is strictly less than $N$.
Hence, for every word $w\in\Sigma^\ast$ with $|w|<N$, we have $w\in L^{<N}$ if and only if $w\in L$.
Equivalently, $\one_{L^{<N}}(w)=\one_L(w)$ for all $w\in\Sigma^\ast$ with $|w|<N$. Therefore the first possible disagreement between $L^{<N}$ and $L$, if such a disagreement exists, can occur only at a word of length at least $N$.
By the definition of the language ultrametric, this implies $d(L^{<N},L)\le 2^{-N}$.
Since $N$ can be chosen arbitrarily large, the regular languages $L^{<N}$ approximate $L$ arbitrarily well in $(\Lang,d)$.
More precisely, for every $\varepsilon>0$, we may choose $N$ so large that $2^{-N}<\varepsilon$, and then the regular language $L^{<N}$ satisfies $d(L^{<N},L)<\varepsilon$.
Thus every language $L\in\Lang$ lies in the closure of $\Lang_{\mathrm{reg}}$, and consequently $\Lang_{\mathrm{reg}}$ is dense in $(\Lang,d)$.

We now identify the completion of the automaton space. By Proposition~\ref{prop:isometric-embedding}, the map $\iota:\mathfrak{A}_\Sigma\to\Lang$ given by $\iota([\mathcal{A}])=L([\mathcal{A}])$ is an isometry from $(\mathfrak{A}_\Sigma,d_{\mathrm{aut}})$ onto $\Lang_{\mathrm{reg}}$ equipped with the restriction of the metric $d$.
Therefore $(\mathfrak{A}_\Sigma,d_{\mathrm{aut}})$ and $(\Lang_{\mathrm{reg}},d)$ have the same metric completion up to canonical isometry.

By Theorem~\ref{thm:known-language-complete}, the space $(\Lang,d)$ is complete.
Since we have just proved that $\Lang_{\mathrm{reg}}$ is dense in $(\Lang,d)$, the complete space $(\Lang,d)$ is precisely the completion of $(\Lang_{\mathrm{reg}},d)$.
Using the isometry between $(\mathfrak{A}_\Sigma,d_{\mathrm{aut}})$ and $(\Lang_{\mathrm{reg}},d)$, it follows that the completion of $(\mathfrak{A}_\Sigma,d_{\mathrm{aut}})$ can be naturally identified with $(\Lang,d)$.
\end{proof}

\subsection{Guarded Automaton Contractions and Fixed Points}

We now pass from the language-level guarded operators recalled in Theorem~\ref{thm:known-guarded-language} to their induced action on automata classes.

\begin{Definition}\label{def:automaton-operator}
Let $S\subseteq\Sigma^\ast$ be a regular language and let $G=\{(u_r,v_r):r=1,\dots,p\}\subseteq\Sigma^\ast\times\Sigma^\ast$ be a finite family of guards.
The guarded language operator $T_{S,G}$ maps regular languages to regular languages, because regular languages are closed under finite union and concatenation with fixed words.
Hence it induces an operator $\mathbb{T}_{S,G}:\mathfrak{A}_\Sigma\to\mathfrak{A}_\Sigma$ by $\mathbb{T}_{S,G}([\mathcal{A}])=[\mathcal{B}]$, where $\mathcal{B}$ is any DFA recognizing $T_{S,G}(L([\mathcal{A}]))$.
\end{Definition}

\begin{Remark}
The operator $\mathbb{T}_{S,G}$ is well defined. Indeed, if $\mathcal{A}'$ is another representative of $[\mathcal{A}]$, then $L(\mathcal{A}')=L(\mathcal{A})$, and hence $T_{S,G}(L(\mathcal{A}'))=T_{S,G}(L(\mathcal{A}))$.
Moreover, if $\mathcal{B}$ and $\mathcal{B}'$ are two deterministic finite automata recognizing $T_{S,G}(L([\mathcal{A}]))$, then $L(\mathcal{B})=L(\mathcal{B}')$, so $[\mathcal{B}]=[\mathcal{B}']$.
Therefore $\mathbb{T}_{S,G}([\mathcal{A}])$ does not depend on any of the choices involved.
\end{Remark}

\begin{Theorem}\label{thm:automaton-contraction}
Let $S\subseteq\Sigma^\ast$ be regular, and let $G=\{(u_r,v_r):r=1,\dots,p\}$ be a finite family of guards with guard length at least $m\ge 1$.
Then the induced automaton operator $\mathbb{T}_{S,G}$ is a contraction on $(\mathfrak{A}_\Sigma,d_{\mathrm{aut}})$ with contraction factor $2^{-m}$;
that is, $d_{\mathrm{aut}}(\mathbb{T}_{S,G}([\mathcal{A}]),\mathbb{T}_{S,G}([\mathcal{B}]))\le 2^{-m}d_{\mathrm{aut}}([\mathcal{A}],[\mathcal{B}])$ for all $[\mathcal{A}],[\mathcal{B}]\in\mathfrak{A}_\Sigma$.
\end{Theorem}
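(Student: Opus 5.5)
The plan is to transfer the language-level contraction of Theorem~\ref{thm:known-guarded-language} to the automaton space along the isometry $\iota$ of Proposition~\ref{prop:isometric-embedding}. The first step is to check that $\iota$ intertwines the two operators, that is,
\[
\iota\bigl(\mathbb{T}_{S,G}([\mathcal{A}])\bigr)=T_{S,G}\bigl(\iota([\mathcal{A}])\bigr)
\qquad\text{for all }[\mathcal{A}]\in\mathfrak{A}_\Sigma .
\]
This follows directly from Definition~\ref{def:automaton-operator}. By definition, $\mathbb{T}_{S,G}([\mathcal{A}])=[\mathcal{B}]$ for some DFA $\mathcal{B}$ with $L(\mathcal{B})=T_{S,G}(L([\mathcal{A}]))$. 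Hence $\iota([\mathcal{B}])=L(\mathcal{B})=T_{S,G}(\iota([\mathcal{A}]))$. Such a $\mathcal{B}$ exists because $S$ is regular and regular languages are closed under $L\mapsto u_rLv_r$ and under finite unions. Well-definedness was already settled in the preceding Remark.

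Next, fix $[\mathcal{A}],[\mathcal{B}]\in\mathfrak{A}_\Sigma$ and write $L=\iota([\mathcal{A}])$ and $M=\iota([\mathcal{B}])$. I will use the isometry property, then the intertwining identity, then Theorem~\ref{thm:known-guarded-language}, and finally the isometry property again:
\[
d_{\mathrm{aut}}\bigl(\mathbb{T}_{S,G}[\mathcal{A}],\mathbb{T}_{S,G}[\mathcal{B}]\bigr)
=d\bigl(T_{S,G}(L),T_{S,G}(M)\bigr)
\le 2^{-m}d(L,M)
=2^{-m}d_{\mathrm{aut}}([\mathcal{A}],[\mathcal{B}]).
\]
This is exactly the claimed estimate.

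The only point needing care is that Theorem~\ref{thm:known-guarded-language} is stated for an arbitrary seed $S\subseteq\Sigma^\ast$, so it certainly applies to our regular $S$. I do not expect a genuine obstacle.

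For completeness I would also recall why the language-level contraction holds, since this is the heart of the matter. Suppose $L\neq M$ first differ at length $n$, so that they agree on all words of length $<n$. A word $w$ with $|w|<n+m$ lies in $u_rLv_r$ exactly when $w=u_rxv_r$ with $|x|=|w|-|u_r|-|v_r|<n$. Membership of such an $x$ in $L$ and in $M$ coincides. Hence $T_{S,G}(L)$ and $T_{S,G}(M)$ agree below length $n+m$, which gives distance at most $2^{-(n+m)}$. The case $L=M$ is trivial, and Remark~\ref{rem:epsilon-contraction} covers the case $n=0$. This confirms the factor $2^{-m}$ and finishes the argument.
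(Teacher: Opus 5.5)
Your proposal is correct and follows essentially the same route as the paper: identify the language recognised by $\mathbb{T}_{S,G}([\mathcal{A}])$ as $T_{S,G}(L([\mathcal{A}]))$, rewrite $d_{\mathrm{aut}}$ as $d$ via the isometry $\iota$, apply the language-level contraction of Theorem~\ref{thm:known-guarded-language}, and pull back along $\iota$ again. Your optional sketch of why the language-level contraction holds is also sound (and works for $n=0$ as well), though the paper simply cites that result.
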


\begin{proof}
Let $[\mathcal{A}],[\mathcal{B}]\in\mathfrak{A}_\Sigma$ be arbitrary.
We have to prove that applying the induced operator $\mathbb{T}_{S,G}$ decreases the automaton distance by at least the factor $2^{-m}$.
By Definition~\ref{def:automaton-operator}, the class $\mathbb{T}_{S,G}([\mathcal{A}])$ is represented by any DFA recognizing the language $T_{S,G}(L([\mathcal{A}]))$.
Therefore the language recognized by the class $\mathbb{T}_{S,G}([\mathcal{A}])$ is exactly $T_{S,G}(L([\mathcal{A}]))$.
In the same way, the language recognized by the class $\mathbb{T}_{S,G}([\mathcal{B}])$ is exactly $T_{S,G}(L([\mathcal{B}]))$. Thus we have $L(\mathbb{T}_{S,G}([\mathcal{A}]))=T_{S,G}(L([\mathcal{A}]))$ and $L(\mathbb{T}_{S,G}([\mathcal{B}]))=T_{S,G}(L([\mathcal{B}]))$.
Now, by the definition of the automaton metric $d_{\mathrm{aut}}$, the distance between two automaton classes is the language ultrametric distance between the languages recognized by these classes.
Hence
\[
d_{\mathrm{aut}}(\mathbb{T}_{S,G}([\mathcal{A}]),\mathbb{T}_{S,G}([\mathcal{B}]))=d(L(\mathbb{T}_{S,G}([\mathcal{A}])),L(\mathbb{T}_{S,G}([\mathcal{B}]))).
\]
Using the two identities above, this becomes
\[
d_{\mathrm{aut}}(\mathbb{T}_{S,G}([\mathcal{A}]),\mathbb{T}_{S,G}([\mathcal{B}]))=d(T_{S,G}(L([\mathcal{A}])),T_{S,G}(L([\mathcal{B}]))).
\]

Since the family of guards $G$ has guard length at least $m$, Theorem~\ref{thm:known-guarded-language} applies to the language operator $T_{S,G}$.
Therefore $T_{S,G}$ is a contraction on $(\Lang,d)$ with contraction factor $2^{-m}$.
Applying this result to the two languages $L([\mathcal{A}])$ and $L([\mathcal{B}])$, we obtain
\[
d(T_{S,G}(L([\mathcal{A}])),T_{S,G}(L([\mathcal{B}])))\le 2^{-m}d(L([\mathcal{A}]),L([\mathcal{B}])).
\]

Finally, by Proposition~\ref{prop:isometric-embedding}, the map $\iota:\mathfrak{A}_\Sigma\to\Lang$ given by $\iota([\mathcal{C}])=L([\mathcal{C}])$ is an isometry onto $\Lang_{\mathrm{reg}}$.
Hence, for the two automaton classes $[\mathcal{A}]$ and $[\mathcal{B}]$, we have $d(L([\mathcal{A}]),L([\mathcal{B}]))=d_{\mathrm{aut}}([\mathcal{A}],[\mathcal{B}])$.

Combining the previous estimates gives $d_{\mathrm{aut}}(\mathbb{T}_{S,G}([\mathcal{A}]),\mathbb{T}_{S,G}([\mathcal{B}]))\le 2^{-m}d_{\mathrm{aut}}([\mathcal{A}],[\mathcal{B}])$.
Since $[\mathcal{A}]$ and $[\mathcal{B}]$ were arbitrary, the induced operator $\mathbb{T}_{S,G}$ is a contraction on $(\mathfrak{A}_\Sigma,d_{\mathrm{aut}})$ with contraction factor $2^{-m}$.
\end{proof}

The previous theorem shows that the induced automaton operator is contractive.
However, Proposition~\ref{prop:not-complete} shows that the automaton space is not complete.
Thus Banach's fixed-point theorem cannot in general be applied inside $\mathfrak{A}_\Sigma$.
The correct fixed-point object lies in the completion, which is naturally identified with the full language space $(\Lang,d)$.
Thus the automaton iteration does not necessarily converge to an automaton, but it always determines a convergent sequence after passing to the completion.

\begin{Theorem}\label{thm:automaton-cauchy}
Under the assumptions of Theorem~\ref{thm:automaton-contraction}, let $[\mathcal{A}^{(0)}]\in\mathfrak{A}_\Sigma$ be arbitrary and define the Picard iteration by $[\mathcal{A}^{(n+1)}]=\mathbb{T}_{S,G}([\mathcal{A}^{(n)}])$ for $n\ge 0$.
Then the sequence $([\mathcal{A}^{(n)}])_{n\ge 0}$ is Cauchy in $(\mathfrak{A}_\Sigma,d_{\mathrm{aut}})$. More precisely, for all integers $q>n\ge 0$, one has $d_{\mathrm{aut}}([\mathcal{A}^{(q)}],[\mathcal{A}^{(n)}])\le 2^{-mn}d_{\mathrm{aut}}([\mathcal{A}^{(1)}],[\mathcal{A}^{(0)}])$.
Moreover, if $L_n:=L([\mathcal{A}^{(n)}])$, then $L_{n+1}=T_{S,G}(L_n)$ for all $n\ge 0$, and the sequence $(L_n)$ converges in $(\Lang,d)$ to the unique language fixed point $L^\ast$ of $T_{S,G}$.
\end{Theorem}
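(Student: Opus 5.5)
The plan is to first obtain a bound on consecutive distances from Theorem~\ref{thm:automaton-contraction}, then use the strong triangle inequality to pass to arbitrary pairs of indices, and finally push everything into $(\Lang,d)$ via the isometry $\iota$ of Proposition~\ref{prop:isometric-embedding}.

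\textbf{Step 1: consecutive distances.} Write $q_0:=2^{-m}$ and $\delta_0:=d_{\mathrm{aut}}([\mathcal{A}^{(1)}],[\mathcal{A}^{(0)}])$. By induction on $k$, using Theorem~\ref{thm:automaton-contraction} and $[\mathcal{A}^{(k+1)}]=\mathbb{T}_{S,G}([\mathcal{A}^{(k)}])$, we get
\[
d_{\mathrm{aut}}([\mathcal{A}^{(k+1)}],[\mathcal{A}^{(k)}])\le 2^{-mk}\delta_0 \qquad\text{for all } k\ge 0 .
\]

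\textbf{Step 2: arbitrary pairs.} For $q>n$, apply the strong triangle inequality of Lemma~\ref{lem:automaton-ultrametric} repeatedly along the chain $[\mathcal{A}^{(n)}],[\mathcal{A}^{(n+1)}],\dots,[\mathcal{A}^{(q)}]$. This gives
\[
d_{\mathrm{aut}}([\mathcal{A}^{(q)}],[\mathcal{A}^{(n)}])\le\max_{n\le k<q} 2^{-mk}\delta_0=2^{-mn}\delta_0,
\]
since the maximum is attained at $k=n$. This is exactly the claimed estimate. Because $\delta_0\le 1$ and $2^{-mn}\to 0$, the sequence is Cauchy. The ultrametric inequality is what avoids the geometric-series factor $1/(1-2^{-m})$ that the ordinary triangle inequality would introduce. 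This is the only point needing care, and it is routine.

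\textbf{Step 3: the language sequence.} As in the proof of Theorem~\ref{thm:automaton-contraction}, Definition~\ref{def:automaton-operator} gives $L(\mathbb{T}_{S,G}([\mathcal{A}]))=T_{S,G}(L([\mathcal{A}]))$. Hence
\[
L_{n+1}=L(\mathbb{T}_{S,G}([\mathcal{A}^{(n)}]))=T_{S,G}(L_n).
\]
So $(L_n)$ is precisely the Picard iteration of $T_{S,G}$ in $(\Lang,d)$ started at $L^{(0)}=L_0$. Theorem~\ref{thm:known-guarded-language} then yields $L_n\to L^\ast$, the unique fixed point of $T_{S,G}$, with $d(L_n,L^\ast)\le 2^{-mn}d(L_0,L^\ast)$.

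Alternatively, $\iota$ maps the Cauchy sequence to a Cauchy sequence in the complete space $(\Lang,d)$ (Theorem~\ref{thm:known-language-complete}). Its limit is a fixed point because $T_{S,G}$ is continuous, being a contraction, and uniqueness identifies that limit with $L^\ast$.
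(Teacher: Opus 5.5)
Your proposal is correct and follows the paper's proof essentially step for step: an inductive bound on consecutive distances from Theorem~\ref{thm:automaton-contraction}, the strong triangle inequality giving the estimate $2^{-mn}d_{\mathrm{aut}}([\mathcal{A}^{(1)}],[\mathcal{A}^{(0)}])$ for all $q>n$, and the identity $L_{n+1}=T_{S,G}(L_n)$ from Definition~\ref{def:automaton-operator} combined with Theorem~\ref{thm:known-guarded-language} to conclude $L_n\to L^\ast$. Your alternative ending, via completeness of $(\Lang,d)$, continuity of $T_{S,G}$ and uniqueness of the fixed point, is also valid but not needed.
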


\begin{proof}
Let $[\mathcal{A}^{(0)}]\in\mathfrak{A}_\Sigma$ be fixed, and let $([\mathcal{A}^{(n)}])_{n\ge 0}$ be the Picard sequence generated by the induced automaton operator $\mathbb{T}_{S,G}$.
By Theorem~\ref{thm:automaton-contraction}, the operator $\mathbb{T}_{S,G}$ is a contraction on $(\mathfrak{A}_\Sigma,d_{\mathrm{aut}})$ with contraction factor $2^{-m}$.
Therefore, for every $j\ge 1$, we have $d_{\mathrm{aut}}([\mathcal{A}^{(j+1)}],[\mathcal{A}^{(j)}])\le 2^{-m}d_{\mathrm{aut}}([\mathcal{A}^{(j)}],[\mathcal{A}^{(j-1)}])$.

We first prove by induction that, for every $j\ge 0$, the distance between two consecutive iterates satisfies $d_{\mathrm{aut}}([\mathcal{A}^{(j+1)}],[\mathcal{A}^{(j)}])\le 2^{-mj}d_{\mathrm{aut}}([\mathcal{A}^{(1)}],[\mathcal{A}^{(0)}])$.
For $j=0$, this is just the identity $d_{\mathrm{aut}}([\mathcal{A}^{(1)}],[\mathcal{A}^{(0)}])\le d_{\mathrm{aut}}([\mathcal{A}^{(1)}],[\mathcal{A}^{(0)}])$. Suppose now that the estimate holds for some $j\ge 0$.
Applying the contraction property to the pair $[\mathcal{A}^{(j+1)}]$ and $[\mathcal{A}^{(j)}]$, we get
\[
d_{\mathrm{aut}}([\mathcal{A}^{(j+2)}],[\mathcal{A}^{(j+1)}])\le 2^{-m}d_{\mathrm{aut}}([\mathcal{A}^{(j+1)}],[\mathcal{A}^{(j)}])\le 2^{-m}2^{-mj}d_{\mathrm{aut}}([\mathcal{A}^{(1)}],[\mathcal{A}^{(0)}]).
\]
Thus $d_{\mathrm{aut}}([\mathcal{A}^{(j+2)}],[\mathcal{A}^{(j+1)}])\le 2^{-m(j+1)}d_{\mathrm{aut}}([\mathcal{A}^{(1)}],[\mathcal{A}^{(0)}])$, which completes the induction.

We now prove the announced estimate for two arbitrary iterates. Let $q>n\ge 0$.
Since $d_{\mathrm{aut}}$ is an ultrametric by Lemma~\ref{lem:automaton-ultrametric}, repeated use of the strong triangle inequality gives
\[
d_{\mathrm{aut}}([\mathcal{A}^{(q)}],[\mathcal{A}^{(n)}])\le \max_{n\le j\le q-1}d_{\mathrm{aut}}([\mathcal{A}^{(j+1)}],[\mathcal{A}^{(j)}]).
\]
By the estimate for consecutive iterates proved above, for every $j$ with $n\le j\le q-1$, we have $d_{\mathrm{aut}}([\mathcal{A}^{(j+1)}],[\mathcal{A}^{(j)}])\le 2^{-mj}d_{\mathrm{aut}}([\mathcal{A}^{(1)}],[\mathcal{A}^{(0)}])$.
Since $j\ge n$ and $m\ge 1$, we have $2^{-mj}\le 2^{-mn}$.
Hence each term in the maximum is bounded above by $2^{-mn}d_{\mathrm{aut}}([\mathcal{A}^{(1)}],[\mathcal{A}^{(0)}])$. Therefore
\[
d_{\mathrm{aut}}([\mathcal{A}^{(q)}],[\mathcal{A}^{(n)}])\le 2^{-mn}d_{\mathrm{aut}}([\mathcal{A}^{(1)}],[\mathcal{A}^{(0)}]).
\]
This proves the quantitative estimate.

We now show that the sequence is Cauchy. Let $\varepsilon>0$ be arbitrary.
Since $m\ge 1$, we have $2^{-mn}\to 0$ as $n\to\infty$. Hence there exists an integer $N$ such that $2^{-mN}d_{\mathrm{aut}}([\mathcal{A}^{(1)}],[\mathcal{A}^{(0)}])<\varepsilon$.
If $q>n\ge N$, then the estimate above gives $d_{\mathrm{aut}}([\mathcal{A}^{(q)}],[\mathcal{A}^{(n)}])<\varepsilon$. If $q=n$, the distance is $0$, and if $q<n$, the same conclusion follows by symmetry of $d_{\mathrm{aut}}$.
Thus $([\mathcal{A}^{(n)}])_{n\ge 0}$ is a Cauchy sequence in $(\mathfrak{A}_\Sigma,d_{\mathrm{aut}})$.

It remains to identify the limit in the completion.
Define $L_n:=L([\mathcal{A}^{(n)}])$ for every $n\ge 0$. By the definition of the induced operator $\mathbb{T}_{S,G}$, the class $[\mathcal{A}^{(n+1)}]=\mathbb{T}_{S,G}([\mathcal{A}^{(n)}])$ recognizes the language $T_{S,G}(L([\mathcal{A}^{(n)}]))$.
Therefore $L_{n+1}=T_{S,G}(L_n)$ for every $n\ge 0$. Hence the sequence $(L_n)$ is exactly the Picard iteration of the guarded language operator $T_{S,G}$ starting from $L_0=L([\mathcal{A}^{(0)}])$.
By Theorem~\ref{thm:known-guarded-language}, the language operator $T_{S,G}$ has a unique fixed point $L^\ast\in\Lang$, and its Picard iteration from any initial language converges to $L^\ast$ in $(\Lang,d)$.
Applying this to the initial language $L_0$, we obtain $L_n\to L^\ast$ in $(\Lang,d)$.
Thus the automaton Picard sequence is Cauchy in the automaton space, and its image under the natural embedding into the completion converges to the unique language fixed point $L^\ast$.
\end{proof}

\begin{Theorem}\label{thm:automaton-fixed-point-iff-regular}
Under the assumptions of Theorem~\ref{thm:automaton-contraction}, the following are equivalent:
\begin{enumerate}[label=\textup{(\roman*)}]
\item the induced automaton operator $\mathbb{T}_{S,G}$ has a fixed point in $\mathfrak{A}_\Sigma$;
\item the unique fixed point $L^\ast$ of the language operator $T_{S,G}$ is regular.
\end{enumerate}
In this case the fixed point in the automaton quotient space is unique.
\end{Theorem}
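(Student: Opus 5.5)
The plan is to transport everything through the isometry $\iota$ of Proposition~\ref{prop:isometric-embedding} and use uniqueness of the language fixed point from Theorem~\ref{thm:known-guarded-language}. The key identity, already extracted in the proof of Theorem~\ref{thm:automaton-contraction}, is
\[
L(\mathbb{T}_{S,G}([\mathcal{A}]))=T_{S,G}(L([\mathcal{A}]))
\qquad\text{for all }[\mathcal{A}]\in\mathfrak{A}_\Sigma ,
\]
that is, $\iota\circ\mathbb{T}_{S,G}=T_{S,G}\circ\iota$. Both implications and the uniqueness claim will be read off from this intertwining relation together with the injectivity of $\iota$.

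For (i)$\Rightarrow$(ii), take a fixed point $[\mathcal{A}^\ast]$ of $\mathbb{T}_{S,G}$ and set $M:=L([\mathcal{A}^\ast])$. Applying $\iota$ to the equation $\mathbb{T}_{S,G}([\mathcal{A}^\ast])=[\mathcal{A}^\ast]$ and using the identity gives $T_{S,G}(M)=M$. Theorem~\ref{thm:known-guarded-language} says the language fixed point is unique, so $M=L^\ast$. Since $M$ is recognized by a DFA, $L^\ast$ is regular.

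For (ii)$\Rightarrow$(i), assume $L^\ast$ is regular and choose a DFA $\mathcal{A}^\ast$ with $L(\mathcal{A}^\ast)=L^\ast$. By the identity,
\[
L(\mathbb{T}_{S,G}([\mathcal{A}^\ast]))=T_{S,G}(L^\ast)=L^\ast=L([\mathcal{A}^\ast]).
\]
Injectivity of $\iota$ (Proposition~\ref{prop:isometric-embedding}) then gives $\mathbb{T}_{S,G}([\mathcal{A}^\ast])=[\mathcal{A}^\ast]$.

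For uniqueness, there are two routes. The first is to reuse the argument of (i)$\Rightarrow$(ii): any fixed point $[\mathcal{B}]$ must satisfy $L([\mathcal{B}])=L^\ast$, so injectivity of $\iota$ forces $[\mathcal{B}]=[\mathcal{A}^\ast]$. The second is to use Theorem~\ref{thm:automaton-contraction} directly: if $[\mathcal{A}]$ and $[\mathcal{B}]$ are both fixed points, then $d_{\mathrm{aut}}([\mathcal{A}],[\mathcal{B}])\le 2^{-m}d_{\mathrm{aut}}([\mathcal{A}],[\mathcal{B}])$ with $2^{-m}<1$, so the distance is zero. The second route has the advantage of not needing completeness of $\mathfrak{A}_\Sigma$, which fails by Proposition~\ref{prop:not-complete}.

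I do not expect a real obstacle here. The only point requiring care is to avoid invoking Banach's theorem inside $\mathfrak{A}_\Sigma$, since that space is incomplete; both existence and uniqueness must instead come from the language level, where $(\Lang,d)$ is complete, or from the contraction inequality alone.
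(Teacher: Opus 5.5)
Your proposal is correct and matches the paper's proof. Both directions and the uniqueness claim come from the identity $L(\mathbb{T}_{S,G}([\mathcal{A}]))=T_{S,G}(L([\mathcal{A}]))$, uniqueness of the language fixed point $L^\ast$, and the fact that classes with equal languages coincide. Your second uniqueness route via the contraction inequality is also valid, since a contraction has at most one fixed point in any metric space. Note, however, that your first route, which is the paper's, does not use completeness of $\mathfrak{A}_\Sigma$ either.
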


\begin{proof}
We prove the equivalence in two directions.
First, assume that the induced automaton operator $\mathbb{T}_{S,G}$ has a fixed point in $\mathfrak{A}_\Sigma$.
Thus there exists an automaton class $[\mathcal{A}^\ast]\in\mathfrak{A}_\Sigma$ such that $\mathbb{T}_{S,G}([\mathcal{A}^\ast])=[\mathcal{A}^\ast]$.
By the definition of the induced operator, the class $\mathbb{T}_{S,G}([\mathcal{A}^\ast])$ recognizes the language $T_{S,G}(L([\mathcal{A}^\ast]))$.
Since this class is equal to $[\mathcal{A}^\ast]$, the two classes recognize the same language. Therefore $T_{S,G}(L([\mathcal{A}^\ast]))=L([\mathcal{A}^\ast])$.
This means that the language $L([\mathcal{A}^\ast])$ is a fixed point of the language operator $T_{S,G}$ in the full language space $(\Lang,d)$.
By Theorem~\ref{thm:known-guarded-language}, the operator $T_{S,G}$ has a unique fixed point in $\Lang$, denoted by $L^\ast$. Hence $L([\mathcal{A}^\ast])=L^\ast$.
But $L([\mathcal{A}^\ast])$ is recognized by a DFA, namely any representative of the class $[\mathcal{A}^\ast]$. Therefore $L([\mathcal{A}^\ast])$ is regular.
Since $L([\mathcal{A}^\ast])=L^\ast$, it follows that $L^\ast$ is regular.

Conversely, assume that the unique language fixed point $L^\ast$ of $T_{S,G}$ is regular.
By the definition of regularity, there exists a DFA $\mathcal{A}^\ast$ such that $L(\mathcal{A}^\ast)=L^\ast$. Equivalently, in terms of equivalence classes, $L([\mathcal{A}^\ast])=L^\ast$.
Since $L^\ast$ is a fixed point of $T_{S,G}$, we have $T_{S,G}(L^\ast)=L^\ast$. Substituting $L([\mathcal{A}^\ast])=L^\ast$, we get $T_{S,G}(L([\mathcal{A}^\ast]))=L([\mathcal{A}^\ast])$.
By the definition of the induced automaton operator, the class $\mathbb{T}_{S,G}([\mathcal{A}^\ast])$ is represented by a DFA recognizing the language $T_{S,G}(L([\mathcal{A}^\ast]))$.
But we have just shown that $T_{S,G}(L([\mathcal{A}^\ast]))=L([\mathcal{A}^\ast])$. Therefore $\mathbb{T}_{S,G}([\mathcal{A}^\ast])$ and $[\mathcal{A}^\ast]$ recognize the same language.
Hence they are the same equivalence class, that is, $\mathbb{T}_{S,G}([\mathcal{A}^\ast])=[\mathcal{A}^\ast]$.
Thus $[\mathcal{A}^\ast]$ is a fixed point of the induced automaton operator.

It remains to prove uniqueness of the fixed point in the automaton quotient space.
Suppose that $[\mathcal{A}^\ast]$ and $[\mathcal{B}^\ast]$ are two fixed points of $\mathbb{T}_{S,G}$. Then $\mathbb{T}_{S,G}([\mathcal{A}^\ast])=[\mathcal{A}^\ast]$ and $\mathbb{T}_{S,G}([\mathcal{B}^\ast])=[\mathcal{B}^\ast]$.
By the same argument as above, this implies that $L([\mathcal{A}^\ast])$ and $L([\mathcal{B}^\ast])$ are both fixed points of the language operator $T_{S,G}$.
Since the language fixed point is unique by Theorem~\ref{thm:known-guarded-language}, we obtain $L([\mathcal{A}^\ast])=L^\ast=L([\mathcal{B}^\ast])$.
Hence the automata $\mathcal{A}^\ast$ and $\mathcal{B}^\ast$ are language equivalent, and therefore $[\mathcal{A}^\ast]=[\mathcal{B}^\ast]$.
Thus the fixed point in the automaton quotient space is unique.
\end{proof}

\begin{Corollary}\label{cor:automaton-certified-depth}
Let $[\mathcal{A}^{(n+1)}]=\mathbb{T}_{S,G}([\mathcal{A}^{(n)}])$ for $n\ge 0$ be the automaton Picard iteration, and let $L^\ast$ be the unique language fixed point of $T_{S,G}$.
If $2^{-mn}d(L([\mathcal{A}^{(0)}]),L^\ast)\le 2^{-N}$ for some integer $N\ge 1$, then $\one_{L([\mathcal{A}^{(n)}])}(w)=\one_{L^\ast}(w)$ for all $w\in\Sigma^\ast$ with $|w|<N$.
In particular, if the initial automaton recognizes the empty language, then $n\ge \left\lceil N/m\right\rceil$ is sufficient.
\end{Corollary}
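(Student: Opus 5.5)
The plan is to pass to languages via Proposition~\ref{prop:isometric-embedding} and then read off agreement up to length $N$ directly from the definition of $d$.

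\textbf{Step 1: transfer the a priori estimate.} Set $L_n:=L([\mathcal{A}^{(n)}])$. By Theorem~\ref{thm:automaton-cauchy}, $L_{n+1}=T_{S,G}(L_n)$, so $(L_n)$ is exactly the language Picard sequence of $T_{S,G}$ started at $L_0$. Theorem~\ref{thm:known-guarded-language} then gives
\[
d(L_n,L^\ast)\le 2^{-mn}d(L_0,L^\ast).
\]
Combined with the hypothesis, this yields $d(L_n,L^\ast)\le 2^{-N}$.

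\textbf{Step 2: unpack the metric.} If $L_n=L^\ast$, there is nothing to prove. Otherwise $d(L_n,L^\ast)=2^{-k}$, where $k$ is the length of a shortest distinguishing word. The bound $2^{-k}\le 2^{-N}$ forces $k\ge N$. Hence no word of length $<N$ distinguishes the two languages, i.e.\ $\one_{L_n}(w)=\one_{L^\ast}(w)$ whenever $|w|<N$.

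\textbf{Step 3: the empty initial language.} Take $L_0=\emptyset$. Every distance in $(\Lang,d)$ is at most $2^{0}=1$, so $d(\emptyset,L^\ast)\le 1$. If $n\ge\lceil N/m\rceil$, then $mn\ge N$, and therefore
\[
2^{-mn}d(L_0,L^\ast)\le 2^{-mn}\le 2^{-N},
\]
so the first part of the corollary applies. Nothing here uses emptiness beyond this trivial bound, so the same sufficient condition holds for any initial automaton.

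\textbf{Main obstacle.} There is no real obstacle; the argument is routine. The only care needed is in Step 2, where the degenerate case $L_n=L^\ast$ must be handled separately and the inequality $2^{-k}\le 2^{-N}$ must be translated correctly into $k\ge N$, giving agreement strictly below length $N$.
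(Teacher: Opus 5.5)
Your proposal is correct and follows the paper's proof essentially step for step: you transfer the a priori estimate of Theorem~\ref{thm:known-guarded-language} to $L_n=L([\mathcal{A}^{(n)}])$, unpack $d(L_n,L^\ast)\le 2^{-N}$ into agreement on all words of length less than $N$ (handling the case $L_n=L^\ast$ separately), and use $d(\emptyset,L^\ast)\le 1$ for the final claim. Your side remark is also correct: the bound $d\le 1$ holds for every initial language, so $n\ge\lceil N/m\rceil$ suffices whatever the starting automaton.
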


\begin{proof}
Let $L_n:=L([\mathcal{A}^{(n)}])$ for every $n\ge 0$. By the definition of the induced automaton operator, the sequence $(L_n)$ is exactly the Picard iteration of the language operator $T_{S,G}$ starting from $L_0=L([\mathcal{A}^{(0)}])$.
Therefore, by Theorem~\ref{thm:known-guarded-language}, we have $d(L_n,L^\ast)\le 2^{-mn}d(L_0,L^\ast)$ for every $n\ge 0$.
Rewriting this in terms of automata, we obtain $d(L([\mathcal{A}^{(n)}]),L^\ast)\le 2^{-mn}d(L([\mathcal{A}^{(0)}]),L^\ast)$.

Assume now that $2^{-mn}d(L([\mathcal{A}^{(0)}]),L^\ast)\le 2^{-N}$.
Combining this assumption with the previous estimate gives $d(L([\mathcal{A}^{(n)}]),L^\ast)\le 2^{-N}$.
We prove that this implies agreement on all words of length strictly less than $N$.
If $d(L([\mathcal{A}^{(n)}]),L^\ast)=0$, then $L([\mathcal{A}^{(n)}])=L^\ast$, and the conclusion is immediate. Suppose therefore that $d(L([\mathcal{A}^{(n)}]),L^\ast)>0$.
By Definition~\ref{def:base-ultrametric}, there exists an integer $k\ge 0$ such that $d(L([\mathcal{A}^{(n)}]),L^\ast)=2^{-k}$, where $k$ is the length of the shortest word on which $L([\mathcal{A}^{(n)}])$ and $L^\ast$ differ.
Since $2^{-k}=d(L([\mathcal{A}^{(n)}]),L^\ast)\le 2^{-N}$, we have $k\ge N$. Hence the first possible disagreement between $L([\mathcal{A}^{(n)}])$ and $L^\ast$ can occur only at length at least $N$.
Therefore the two languages coincide on all words $w\in\Sigma^\ast$ with $|w|<N$, that is, $\one_{L([\mathcal{A}^{(n)}])}(w)=\one_{L^\ast}(w)$ for all such words.

It remains to prove the final statement. Assume that the initial automaton recognizes the empty language.
Then $L([\mathcal{A}^{(0)}])=\emptyset$, and therefore $d(L([\mathcal{A}^{(0)}]),L^\ast)\le 1$. Hence $2^{-mn}d(L([\mathcal{A}^{(0)}]),L^\ast)\le 2^{-mn}$. Thus the condition $2^{-mn}d(L([\mathcal{A}^{(0)}]),L^\ast)\le 2^{-N}$ is certainly satisfied whenever $2^{-mn}\le 2^{-N}$, which is equivalent to $mn\ge N$.
Since $m\ge 1$ and $n$ is an integer, the condition $mn\ge N$ is guaranteed by $n\ge \left\lceil N/m\right\rceil$.
This proves the corollary.
\end{proof}

\subsection{Regular and Nonregular Fixed-Point Examples}

\begin{Example}\label{ex:distinguishing-word}
We first illustrate the metric $d_{\mathrm{aut}}$ on automaton classes.
Let $\Sigma=\{a\}$, let $\mathcal{A}_1$ recognize $a^\ast$, and let $\mathcal{A}_2$ recognize $a^+$.
The languages $a^\ast$ and $a^+$ differ exactly on the empty word $\varepsilon$, because $\varepsilon\in a^\ast$ and $\varepsilon\notin a^+$.
Since $|\varepsilon|=0$, the shortest distinguishing word has length $0$, and therefore $d_{\mathrm{aut}}([\mathcal{A}_1],[\mathcal{A}_2])=2^0=1$.

Now let $\mathcal{A}_3$ recognize the language $\{\varepsilon\}\cup\{a^n:n\ge 2\}$.
Then $L(\mathcal{A}_1)=a^\ast$ and $L(\mathcal{A}_3)=\{\varepsilon\}\cup\{a^n:n\ge 2\}$ agree on the only word of length $0$, namely $\varepsilon$, but they differ on the word $a$, which has length $1$.
Hence $d_{\mathrm{aut}}([\mathcal{A}_1],[\mathcal{A}_3])=2^{-1}$.

This example shows the direct meaning of the automaton ultrametric: two automaton classes are close when their recognized languages agree on all short inputs.
Thus the metric does not measure the number of states, the shape of the transition graph, or syntactic similarity of automata.
It measures semantic agreement up to finite input depth. This is precisely the point of passing to language-equivalence classes.
\end{Example}

\begin{Example}\label{ex:regular-fixed-point}
Let $\Sigma=\{a\}$ and consider the guarded language operator $T(L)=\{\varepsilon\}\cup aL$.
Here the seed language $\{\varepsilon\}$ is regular, and the unique guard is $(a,\varepsilon)$, so the guard length is $|a|+|\varepsilon|=1$.
Hence, by the guarded language contraction theorem recalled in the preliminaries, the language operator is contractive with contraction factor $2^{-1}$.
Since the seed and the guard preserve regularity, this operator induces an automaton operator $\mathbb{T}$ on $\mathfrak{A}_\Sigma$.
Starting from the empty language, the first Picard iterates are $L^{(0)}=\emptyset$, $L^{(1)}=\{\varepsilon\}$, $L^{(2)}=\{\varepsilon,a\}$, and $L^{(3)}=\{\varepsilon,a,a^2\}$.
By induction, $L^{(n)}=\{\varepsilon,a,\dots,a^{n-1}\}$ for every $n\ge 1$. The limit in the completion is $L^\ast=a^\ast$, since $T(a^\ast)=\{\varepsilon\}\cup aa^\ast=a^\ast$.
The important automaton-level point is that this fixed point is regular.
Therefore Theorem~\ref{thm:automaton-fixed-point-iff-regular} applies and shows that the induced automaton operator has a fixed point in the automaton quotient space.
This fixed point is represented by the one-state DFA whose unique state is both initial and accepting and whose $a$-transition is a loop.
Thus this example illustrates the regular case of the theory.
The completion does not add a genuinely new limit outside finite automata: the language fixed point is already represented by a finite automaton.
In this case, the automaton-level contraction leads to an actual fixed point in $\mathfrak{A}_\Sigma$.
\end{Example}

\begin{Example}\label{ex:nonregular-fixed-point}
Let $\Sigma=\{a,b\}$ and consider the guarded language operator $T(L)=\{\varepsilon\}\cup aLb$.
The seed language $\{\varepsilon\}$ is regular, and the unique guard is $(a,b)$, whose total length is $|a|+|b|=2$.
Hence the corresponding language operator is contractive with contraction factor $2^{-2}$, and it induces a contractive operator on automaton classes.
Starting from $L^{(0)}=\emptyset$, we obtain $L^{(1)}=\{\varepsilon\}$, $L^{(2)}=\{\varepsilon,ab\}$, and $L^{(3)}=\{\varepsilon,ab,a^2b^2\}$. In general, $L^{(n)}=\{a^k b^k:0\le k<n\}$.
Each $L^{(n)}$ is finite, hence regular, so every Picard iterate is represented by a DFA.
The limit in the full language space is $L^\ast=\{a^k b^k:k\ge 0\}$. Indeed, $T(L^\ast)=\{\varepsilon\}\cup aL^\ast b=\{\varepsilon\}\cup\{a^{k+1}b^{k+1}:k\ge 0\}=L^\ast$.
By uniqueness of the guarded language fixed point, this is the unique language fixed point of $T$.
However, $L^\ast=\{a^k b^k:k\ge 0\}$ is not regular. Therefore, by Theorem~\ref{thm:automaton-fixed-point-iff-regular}, the induced automaton operator has no fixed point in $\mathfrak{A}_\Sigma$.
Nevertheless, Theorem~\ref{thm:automaton-cauchy} shows that the automaton Picard sequence is Cauchy and converges after passing to the completion.
This example is central for the present paper. At the language level, the fixed point exists by the guarded contraction theorem.
At the automaton level, the same iteration remains meaningful, but the limit is not represented by any finite automaton.
Thus the new contribution is the precise identification of the phenomenon: finite automata form a dense but incomplete metric subspace, and guarded automaton iterations may converge only to nonregular languages in the completion.
\end{Example}

\begin{Example}\label{ex:noncomplete-space}
The previous example also gives a concrete witness of incompleteness when $\Sigma=\{a,b\}$.
For each $n\ge 1$, let $[\mathcal{A}_n]$ be an automaton class recognizing $L_n=\{a^k b^k:0\le k<n\}$.
If $\ell>n$, then $L_n$ and $L_\ell$ agree on all words of length strictly less than $2n$, and their first disagreement occurs at the word $a^n b^n$, whose length is $2n$.
Therefore $d_{\mathrm{aut}}([\mathcal{A}_n],[\mathcal{A}_\ell])=2^{-2n}$ for all $\ell>n$.

It follows that $([\mathcal{A}_n])_{n\ge 1}$ is a Cauchy sequence in $(\mathfrak{A}_\Sigma,d_{\mathrm{aut}})$.
Indeed, given $\varepsilon>0$, choose $N$ such that $2^{-2N}<\varepsilon$. Then, for all $\ell>n\ge N$, we have $d_{\mathrm{aut}}([\mathcal{A}_n],[\mathcal{A}_\ell])=2^{-2n}\le 2^{-2N}<\varepsilon$, and the remaining cases follow by symmetry or equality of indices.
If this Cauchy sequence had a limit in $\mathfrak{A}_\Sigma$, that limit would have to recognize the completion-limit language $\{a^k b^k:k\ge 0\}$.
But this language is not regular, so no DFA recognizes it. Therefore no limit exists inside the automaton quotient space.
This example illustrates Proposition~\ref{prop:not-complete} in a concrete guarded setting. The point is not only that regular languages are not closed under this metric limit, but also that the natural Picard sequence of finite automata may be forced to converge to a nonregular language.
This is exactly why the completion is essential in the automaton version of the theory.
\end{Example}

\begin{Example}\label{ex:certified-depth}
We conclude with the quantitative interpretation of convergence. Consider again the operator $T(L)=\{\varepsilon\}\cup aLb$ from Example~\ref{ex:nonregular-fixed-point}.
Its unique language fixed point is $L^\ast=\{a^k b^k:k\ge 0\}$, while the $n$-th Picard iterate from the empty language is $L^{(n)}=\{a^k b^k:0\le k<n\}$.
The shortest word on which $L^{(n)}$ and $L^\ast$ differ is $a^n b^n$, whose length is $2n$. Therefore $d(L^{(n)},L^\ast)=2^{-2n}$.
By Corollary~\ref{cor:automaton-certified-depth}, if $2^{-2n}\le 2^{-N}$, equivalently if $2n\ge N$, then any automaton recognizing $L^{(n)}$ decides membership correctly for all words of length strictly less than $N$ with respect to the nonregular limit language $L^\ast$.
Thus after $n$ iterations, the finite automaton approximation is guaranteed to be correct on all inputs of length less than $2n$.
In particular, for any prescribed input bound $N$, it is enough to take $n\ge \left\lceil N/2\right\rceil$.
This example highlights the practical meaning of the automaton metric.
Even when no finite automaton can represent the fixed point exactly, the Picard iteration produces finite automata with certified finite-depth correctness.
This is the automaton-level contribution of the theory: nonregular fixed points can still be approximated by finite automata with explicit quantitative guarantees.
\end{Example}

\section{Some Generalizations, Extensions and Certified Finite-State Approximation}

In this section we develop three new theoretical results that extend the core framework of
Section~\ref{sec:main} in directions required for the application to input validation
described in Section~\ref{sec:implementation}.

\subsection{Power-Type Guarded Language Operators}

\begin{Definition}\label{def:q-power-guarded-operator}
Let \(S\subseteq\Sigma^\ast\), let \(q\ge 1\) be an integer, and let \(G=\{(u_r,v_r):r=1,\dots,p\}\subseteq\Sigma^\ast\times\Sigma^\ast\) be a finite nonempty family of guards. For every language \(L\in\mathcal{L}\), define \(\mathcal{U}_G(L)=\bigcup_{r=1}^{p}u_rLv_r\). The corresponding \(q\)-power guarded language operator \(T_{S,G,q}:\mathcal{L}\to\mathcal{L}\) is defined by
\[
T_{S,G,q}(L)=S\cup\bigl(\mathcal{U}_G(L)\bigr)^q=S\cup\left(\bigcup_{r=1}^{p}u_rLv_r\right)^q,
\]
where, for a language \(A\subseteq\Sigma^\ast\), \(A^q=\underbrace{A\cdot A\cdots A}_{q\text{ factors}}\) denotes the \(q\)-fold language concatenation.
\end{Definition}

%The following result generalize the main fixed-point theorem for guarded language operators from \cite{HristovIlchevKulinaZlatanov2026}.

\begin{Theorem}\label{thm:q-power-guarded-contraction}
Let \(T_{S,G,q}\) be the operator from Definition~\ref{def:q-power-guarded-operator}. Assume that the guard family \(G\) has guard length at least \(m\ge 1\), that is, \(\min_{1\le r\le p}\bigl(|u_r|+|v_r|\bigr)\ge m\). Then \(T_{S,G,q}\) is a contraction on the complete ultrametric space \((\mathcal{L},d)\) with contraction coefficient \(2^{-qm}\). More precisely,
\[
d\bigl(T_{S,G,q}(L),T_{S,G,q}(M)\bigr)\le 2^{-qm}d(L,M)
\qquad\text{for all }L,M\in\mathcal{L}.
\]
Consequently, \(T_{S,G,q}\) has a unique fixed point \(L^\ast\in\mathcal{L}\). For every initial language \(L^{(0)}\in\mathcal{L}\), the Picard iteration \(L^{(n+1)}=T_{S,G,q}(L^{(n)})\), \(n\ge 0\), converges to \(L^\ast\). Moreover,
\[
d(L^{(n)},L^\ast)\le 2^{-qmn}d(L^{(0)},L^\ast)
\qquad\text{for all }n\ge 0.
\]
\end{Theorem}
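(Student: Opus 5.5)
The plan is to reduce the contraction estimate to a statement about agreement of languages up to a given length, and then invoke Banach's fixed-point theorem (Theorem~\ref{thm:banach}) on the complete space $(\mathcal{L},d)$ of Theorem~\ref{thm:known-language-complete}. If $L=M$ there is nothing to prove. So I will fix $L\neq M$ with $d(L,M)=2^{-k}$, meaning that $L$ and $M$ agree on every word of length $<k$. It then suffices to show that $T_{S,G,q}(L)$ and $T_{S,G,q}(M)$ agree on every word of length $<k+qm$. This gives $d(T_{S,G,q}(L),T_{S,G,q}(M))\le 2^{-(k+qm)}=2^{-qm}d(L,M)$.

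\textbf{Step 1 (one guard layer).} I will show that $\mathcal{U}_G(L)$ and $\mathcal{U}_G(M)$ agree on all words of length $<k+m$, and that every word of $\mathcal{U}_G(L)\cup\mathcal{U}_G(M)$ has length at least $m$. For the first claim, take $w\in\mathcal{U}_G(L)$ with $|w|<k+m$ and write $w=u_rxv_r$ with $x\in L$. Then $|x|=|w|-(|u_r|+|v_r|)<k$, so $x\in M$ and hence $w\in\mathcal{U}_G(M)$; the reverse inclusion follows by symmetry. The second claim is immediate from $|u_r|+|v_r|\ge m$. This is the same mechanism as in Theorem~\ref{thm:known-guarded-language} and Remark~\ref{rem:epsilon-contraction}.

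\textbf{Step 2 (the $q$-fold product).} This is the step where the improved exponent $qm$ appears, and it is the only genuinely new point. Let $w\in(\mathcal{U}_G(L))^q$ with $|w|<k+qm$, and factor it as $w=x_1\cdots x_q$ with $x_i\in\mathcal{U}_G(L)$. Since each of the other $q-1$ factors has length at least $m$, every individual factor satisfies $|x_i|\le |w|-(q-1)m<k+m$. By Step 1 each $x_i$ then lies in $\mathcal{U}_G(M)$, so $w\in(\mathcal{U}_G(M))^q$, and symmetry gives the converse. Finally, the union with the common seed $S$ preserves agreement on words of length $<k+qm$, which completes the contraction estimate.

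\textbf{Step 3 (consequences).} Since $2^{-qm}<1$, Theorem~\ref{thm:banach} yields a unique fixed point $L^\ast$ and convergence of the Picard iteration from any $L^{(0)}$. For the rate, I will use $T_{S,G,q}(L^\ast)=L^\ast$ to get $d(L^{(n+1)},L^\ast)=d(T_{S,G,q}(L^{(n)}),T_{S,G,q}(L^\ast))\le 2^{-qm}d(L^{(n)},L^\ast)$, and then induct on $n$.

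The main obstacle is the length bookkeeping in Step~2: one must use the lower bound $m$ on the lengths of the other factors, not merely the length of the factor itself, to get $|x_i|<k+m$. This is exactly what allows the exponent to accumulate to $qm$ rather than staying at $m$.
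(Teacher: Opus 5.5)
Your proof is correct and follows essentially the paper's argument. The paper writes each word of $(\mathcal{U}_G(L))^q$ directly as $u_{r_1}w_1v_{r_1}\cdots u_{r_q}w_qv_{r_q}$ and deduces $\sum_j|w_j|<n_0$ (treating $n_0=0$ as a separate vacuous case), whereas you first prove one-layer agreement of $\mathcal{U}_G(L)$ and $\mathcal{U}_G(M)$ below length $k+m$ and then bound each factor of the $q$-fold product; the seed step, the use of Banach's theorem, and the inductive rate estimate are the same as in the paper.
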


\begin{proof}
Let \(L,M\in\mathcal{L}\) be arbitrary. If \(L=M\), then \(T_{S,G,q}(L)=T_{S,G,q}(M)\), and therefore \(d\bigl(T_{S,G,q}(L),T_{S,G,q}(M)\bigr)=0=2^{-qm}d(L,M)\). Thus the required estimate is immediate in this case.

Assume now that \(L\neq M\), and let \(n_0=\min\{\,|w|:\mathbf{1}_L(w)\neq\mathbf{1}_M(w)\,\}\). By the definition of the length-based ultrametric, \(d(L,M)=2^{-n_0}\). The languages \(L\) and \(M\) therefore agree on every word of length strictly less than \(n_0\).

Set \(P_L=\bigl(\mathcal{U}_G(L)\bigr)^q\) and \(P_M=\bigl(\mathcal{U}_G(M)\bigr)^q\). We shall prove that \(P_L\) and \(P_M\) agree on all words of length strictly less than \(n_0+qm\).

We first consider the case \(n_0=0\). Every word \(x\in P_L\) admits a representation
\[
x=u_{r_1}w_1v_{r_1}u_{r_2}w_2v_{r_2}\cdots u_{r_q}w_qv_{r_q},
\]
where \(r_1,\dots,r_q\in\{1,\dots,p\}\) and \(w_1,\dots,w_q\in L\). Hence
\[
|x|=\sum_{j=1}^{q}\bigl(|u_{r_j}|+|v_{r_j}|\bigr)+\sum_{j=1}^{q}|w_j|\ge qm.
\]
The same lower bound holds for every word in \(P_M\). Since \(n_0=0\), we have \(n_0+qm=qm\). Therefore neither \(P_L\) nor \(P_M\) contains a word of length strictly less than \(n_0+qm\). Consequently, the two languages agree vacuously on all words of length strictly less than \(n_0+qm\).

Assume now that \(n_0\ge 1\). Let \(x\in P_L\) and suppose that \(|x|<n_0+qm\). By the definition of the \(q\)-fold concatenation, there exist indices \(r_1,\dots,r_q\in\{1,\dots,p\}\) and words \(w_1,\dots,w_q\in L\) such that
\(
x=u_{r_1}w_1v_{r_1}u_{r_2}w_2v_{r_2}\cdots u_{r_q}w_qv_{r_q}.
\)
Taking lengths gives
\(
|x|=\sum_{j=1}^{q}\bigl(|u_{r_j}|+|v_{r_j}|\bigr)+\sum_{j=1}^{q}|w_j|.
\)
Since every guard has total length at least \(m\), \(\sum_{j=1}^{q}\bigl(|u_{r_j}|+|v_{r_j}|\bigr)\ge qm\). It follows that
\(
qm+\sum_{j=1}^{q}|w_j|\le |x|<n_0+qm,
\)
and hence \(\sum_{j=1}^{q}|w_j|<n_0\). Since all word lengths are nonnegative integers, this implies \(|w_j|<n_0\) for every \(j=1,\dots,q\). The languages \(L\) and \(M\) agree on all words of length strictly less than \(n_0\). Therefore \(w_j\in L\Longleftrightarrow w_j\in M\) for every \(j=1,\dots,q\). In particular, each \(w_j\) appearing in the chosen representation of \(x\) also belongs to \(M\). The same representation then shows that \(x\in P_M\). Thus every word of \(P_L\) having length strictly less than \(n_0+qm\) also belongs to \(P_M\).

By interchanging the roles of \(L\) and \(M\), we obtain the converse implication. Hence
\[
\mathbf{1}_{P_L}(x)=\mathbf{1}_{P_M}(x)
\qquad\text{for every }x\in\Sigma^\ast\text{ with }|x|<n_0+qm.
\]
Therefore \(P_L\) and \(P_M\) agree on all words of length strictly less than \(n_0+qm\).

We now add the common seed language \(S\). Let \(x\in\Sigma^\ast\) satisfy \(|x|<n_0+qm\). If \(x\in S\), then \(x\in S\cup P_L\) and \(x\in S\cup P_M\). If \(x\notin S\), then
\[
x\in S\cup P_L\Longleftrightarrow x\in P_L\Longleftrightarrow x\in P_M\Longleftrightarrow x\in S\cup P_M.
\]
Thus \(T_{S,G,q}(L)=S\cup P_L\) and \(T_{S,G,q}(M)=S\cup P_M\) agree on all words of length strictly less than \(n_0+qm\).

Consequently, their first possible disagreement can occur only at a word of length at least \(n_0+qm\). By the definition of \(d\),
\(
d\bigl(T_{S,G,q}(L),T_{S,G,q}(M)\bigr)\le 2^{-(n_0+qm)}.
\)
Using \(d(L,M)=2^{-n_0}\), we obtain
\(
d\bigl(T_{S,G,q}(L),T_{S,G,q}(M)\bigr)\le 2^{-qm}2^{-n_0}=2^{-qm}d(L,M).
\)
Since \(q\ge 1\) and \(m\ge 1\), we have \(0<2^{-qm}<1\). Therefore \(T_{S,G,q}\) is a contraction on \((\mathcal{L},d)\).

The space \((\mathcal{L},d)\) is complete. Hence Banach's fixed-point theorem implies that \(T_{S,G,q}\) possesses a unique fixed point \(L^\ast\in\mathcal{L}\), and that the Picard iteration converges to \(L^\ast\) from every initial language \(L^{(0)}\in\mathcal{L}\).

Finally, applying the contraction estimate repeatedly gives
\[
d(L^{(n)},L^\ast)=d\bigl(T_{S,G,q}^{\,n}(L^{(0)}),T_{S,G,q}^{\,n}(L^\ast)\bigr)\le \left(2^{-qm}\right)^n d(L^{(0)},L^\ast).
\]
Since \(T_{S,G,q}(L^\ast)=L^\ast\), this becomes \(d(L^{(n)},L^\ast)\le 2^{-qmn}d(L^{(0)},L^\ast),\) which proves the stated quantitative estimate.
\end{proof}

\begin{Corollary}\label{cor:q-power-extreme-seeds}
Under the assumptions of Theorem~\ref{thm:q-power-guarded-contraction}, the following statements hold:
\begin{enumerate}[label=\textup{(\roman*)}]
\item If \(S=\emptyset\), then the unique fixed point of \(T_{\emptyset,G,q}\) is \(L^\ast=\emptyset\).
\item If \(S=\Sigma^\ast\), then the unique fixed point of \(T_{\Sigma^\ast,G,q}\) is \(L^\ast=\Sigma^\ast\).
\end{enumerate}
In both cases, the Picard iteration converges to the corresponding fixed point from every initial language \(L^{(0)}\in\mathcal{L}\).
\end{Corollary}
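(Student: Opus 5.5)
By Theorem~\ref{thm:q-power-guarded-contraction}, $T_{S,G,q}$ is a contraction on the complete space $(\mathcal{L},d)$. It therefore has exactly one fixed point, and Picard iteration converges to it from every initial language. So in each case it is enough to exhibit one candidate language $L^\ast$ and check directly that $T_{S,G,q}(L^\ast)=L^\ast$. Uniqueness then identifies it as \emph{the} fixed point, and the convergence claim follows verbatim from the same theorem, with the rate $d(L^{(n)},L^\ast)\le 2^{-qmn}d(L^{(0)},L^\ast)$.

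For (i), take $L^\ast=\emptyset$. For every $r$ we have $u_r\emptyset v_r=\emptyset$, since there is no $w\in\emptyset$ to form $u_rwv_r$. Hence $\mathcal{U}_G(\emptyset)=\emptyset$. The $q$-fold concatenation of the empty language is empty because $q\ge 1$: any product $A\cdot B$ with $A=\emptyset$ is empty. Therefore $T_{\emptyset,G,q}(\emptyset)=\emptyset\cup\emptyset=\emptyset$. The only point needing care is that $q\ge 1$, so we never meet the convention $A^0=\{\varepsilon\}$.

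For (ii), take $L^\ast=\Sigma^\ast$. Since $(\mathcal{U}_G(\Sigma^\ast))^q\subseteq\Sigma^\ast$, we get $T_{\Sigma^\ast,G,q}(\Sigma^\ast)=\Sigma^\ast\cup(\mathcal{U}_G(\Sigma^\ast))^q=\Sigma^\ast$. Indeed, for any $L$ at all, $T_{\Sigma^\ast,G,q}(L)=\Sigma^\ast$. So the iteration reaches the fixed point already after one step, which is consistent with the general convergence statement.

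There is no genuine obstacle here. The only subtlety is to make sure the uniqueness clause of Theorem~\ref{thm:q-power-guarded-contraction} is really what gives ``the'' fixed point. That clause requires the guard-length hypothesis $m\ge1$, which is part of the standing assumptions.
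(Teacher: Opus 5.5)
Your proposal is correct and follows the paper's proof essentially verbatim: in each case you verify directly that the candidate language ($\emptyset$ via $u_r\emptyset v_r=\emptyset$ and $\emptyset^q=\emptyset$ for $q\ge 1$; $\Sigma^\ast$ via $T_{\Sigma^\ast,G,q}(L)=\Sigma^\ast$ for all $L$) is a fixed point. Uniqueness and convergence from every initial language then come from Theorem~\ref{thm:q-power-guarded-contraction}. Your extra remark that in case (ii) the iteration stabilises after one step is a harmless and correct addition.
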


\begin{proof}
Suppose first that \(S=\emptyset\). Since \(u_r\emptyset v_r=\emptyset\) for every \(r=1,\dots,p\), we have \(\mathcal{U}_G(\emptyset)=\bigcup_{r=1}^{p}u_r\emptyset v_r=\emptyset\). Because \(q\ge 1\), the \(q\)-fold concatenation of the empty language is again empty, so \(\bigl(\mathcal{U}_G(\emptyset)\bigr)^q=\emptyset^q=\emptyset\). Therefore \(T_{\emptyset,G,q}(\emptyset)=\emptyset\cup\emptyset=\emptyset\). Thus \(\emptyset\) is a fixed point. By the uniqueness established in Theorem~\ref{thm:q-power-guarded-contraction}, it is the unique fixed point of \(T_{\emptyset,G,q}\).

Suppose now that \(S=\Sigma^\ast\). For every language \(L\in\mathcal{L}\),
\(
T_{\Sigma^\ast,G,q}(L)=\Sigma^\ast\cup\bigl(\mathcal{U}_G(L)\bigr)^q=\Sigma^\ast.
\)
In particular, \(T_{\Sigma^\ast,G,q}(\Sigma^\ast)=\Sigma^\ast\). Hence \(\Sigma^\ast\) is a fixed point. Uniqueness follows again from Theorem~\ref{thm:q-power-guarded-contraction}.

The convergence of the Picard iteration from every initial language follows directly from the same theorem.
\end{proof}

\begin{Remark}
For \(q=1\), the operator \(T_{S,G,q}\) reduces to the standard guarded language operator \(T_{S,G,1}(L)=S\cup\bigcup_{r=1}^{p}u_rLv_r\), and Theorem~\ref{thm:q-power-guarded-contraction} reduces to the usual guarded contraction result with contraction coefficient \(2^{-m}\).
Hence, this result generalize the main fixed-point theorem for guarded language operators from \cite{HristovIlchevKulinaZlatanov2026}.
\end{Remark}

\subsection{Systems of Guarded Operators}

Practical input models typically consist of several parameter types or production rules, each validated independently. We organise such a collection as a finite product of guarded operators, one operator per component, and analyse them jointly through a product ultrametric. In the two-component case, when the negative guard families are empty and each component depends only on its own language variable, the system considered below is a decoupled positive special case of the positive--negative guarded language systems studied in~\cite{AjetiHristovIlchevZlatanov2026}. The present \(k\)-component formulation extends this independent-coordinate construction to an arbitrary finite number of components. As we show below, the joint operator decouples into its components, so the existence, uniqueness, and convergence guarantees of the single-operator case carry over to the product setting coordinatewise.

\begin{Definition}\label{def:system-guarded}
Let $k\ge 1$.  For each $i=1,\dots,k$, let $S_i\subseteq\Sigma^\ast$ be a
regular language, and let
\(
G_i=\{(u^i_r,v^i_r):r=1,\dots,p_i\}\subseteq\Sigma^\ast\times\Sigma^\ast
\)
be a finite nonempty family of guards.
The $i$-th component operator acts
on its own component language and is given by
\[
T_i(L_1,\dots,L_k)
  = S_i\cup\bigcup_{r=1}^{p_i}u^i_r\,L_{i}\,v^i_r.
\]
The joint operator is $\mathbf{T}=(T_1,\dots,T_k)$, acting on the product
space $\mathcal{L}^k$.  The minimum guard length of the system is
$m=\min_{i,r}(|u^i_r|+|v^i_r|)$.
The product ultrametric on
$\mathcal{L}^k$ is
\[
d_\infty\!\left((L_1,\dots,L_k),(M_1,\dots,M_k)\right)
  =\max_{1\le i\le k}d(L_i,M_i).
\]
\end{Definition}

\begin{Remark}\label{rem:system-decouples}
Each component operator $T_i$ depends only on its own argument $L_i$.
Hence the joint operator $\mathbf{T}$ is the product of $k$ independent
single guarded operators of the form recalled in
Theorem~\ref{thm:known-guarded-language}, and its fixed point is obtained
componentwise.
Theorem~\ref{thm:system-fixed-point} below is thus the
coordinatewise form of the single-operator result;
the product ultrametric
$d_\infty$ serves only to package the $k$ independent contractions into a
single contraction on $\mathcal{L}^k$.
\end{Remark}

\begin{Theorem}\label{thm:system-fixed-point}
Let $\mathbf{T}$ be a system of guarded operators as in
Definition~\ref{def:system-guarded}, with minimum guard length $m\ge 1$.
Then:
\begin{enumerate}[label=\textup{(\roman*)}]
  \item $(\mathcal{L}^k,d_\infty)$ is a complete ultrametric space.
  \item $\mathbf{T}$ is a contraction on $(\mathcal{L}^k,d_\infty)$ with
        contraction factor $2^{-m}$.
  \item $\mathbf{T}$ has a unique fixed point $(L^\ast_1,\dots,L^\ast_k)\in\mathcal{L}^k$,
        and the Picard iteration $\mathbf{L}^{(n+1)}=\mathbf{T}(\mathbf{L}^{(n)})$
        converges to this fixed point from any initial vector, with
        \[
          d_\infty(\mathbf{L}^{(n)},\mathbf{L}^\ast)
            \le 2^{-mn}\,d_\infty(\mathbf{L}^{(0)},\mathbf{L}^\ast)
          \qquad\text{for all }n\ge 0.
        \]
\end{enumerate}
\end{Theorem}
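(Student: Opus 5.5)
The proof proceeds in the three stages of the statement, treating $\mathbf{T}$ as a product of independent contractions and reducing everything to Theorem~\ref{thm:known-guarded-language} and Banach's theorem (Theorem~\ref{thm:banach}).

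\emph{Part (i).} First check that $d_\infty$ is an ultrametric. Nonnegativity, symmetry and the identity of indiscernibles follow coordinatewise from Theorem~\ref{thm:known-language-ultrametric}. For the strong triangle inequality, for each $i$ we have
\[
d(L_i,N_i)\le\max\{d(L_i,M_i),d(M_i,N_i)\}\le\max\{d_\infty(\mathbf{L},\mathbf{M}),d_\infty(\mathbf{M},\mathbf{N})\},
\]
and then we take the maximum over $i$. For completeness, let $(\mathbf{L}^{(n)})$ be Cauchy in $d_\infty$. Since $d(L^{(n)}_i,L^{(\ell)}_i)\le d_\infty(\mathbf{L}^{(n)},\mathbf{L}^{(\ell)})$, each coordinate sequence is Cauchy in $(\mathcal{L},d)$. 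By Theorem~\ref{thm:known-language-complete} each coordinate converges to some $L_i$. Because there are only finitely many coordinates, for a given $\varepsilon$ we can choose an index beyond which all $k$ coordinate distances are below $\varepsilon$. Hence $d_\infty(\mathbf{L}^{(n)},(L_1,\dots,L_k))\to 0$.

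\emph{Part (ii).} Each $T_i$ depends only on $L_i$ (Remark~\ref{rem:system-decouples}), so $T_i(\mathbf{L})=T_{S_i,G_i}(L_i)$, which is a single guarded operator. Its guard length is at least $m_i:=\min_r(|u^i_r|+|v^i_r|)\ge m\ge1$. Theorem~\ref{thm:known-guarded-language} then gives
\[
d(T_i(\mathbf{L}),T_i(\mathbf{M}))\le 2^{-m_i}d(L_i,M_i)\le 2^{-m}d(L_i,M_i)\le 2^{-m}d_\infty(\mathbf{L},\mathbf{M}).
\]
Taking the maximum over $i$ yields the contraction with factor $2^{-m}$. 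The one point to handle carefully is using $m\le m_i$: the system's $m$ is a minimum over all components, so each component operator individually has guard length at least $m$ and the weaker factor $2^{-m}$ applies uniformly.

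\emph{Part (iii).} By (i) and (ii), Theorem~\ref{thm:banach} gives a unique fixed point $\mathbf{L}^\ast$ and convergence of the Picard iteration from every initial vector. For the rate, iterate the contraction using $\mathbf{T}(\mathbf{L}^\ast)=\mathbf{L}^\ast$:
\[
d_\infty(\mathbf{L}^{(n)},\mathbf{L}^\ast)=d_\infty(\mathbf{T}^n\mathbf{L}^{(0)},\mathbf{T}^n\mathbf{L}^\ast)\le 2^{-mn}d_\infty(\mathbf{L}^{(0)},\mathbf{L}^\ast),
\]
which is a short induction on $n$. As a consistency check, the fixed point is componentwise: $L_i^\ast$ is the unique fixed point of $T_{S_i,G_i}$.

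No step is a genuine obstacle. The only places needing care are the finiteness of $k$ in the completeness argument and the passage from $m_i$ to $m$ in (ii).
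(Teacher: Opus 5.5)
Your proof is correct and follows the paper's argument essentially step for step: coordinatewise ultrametric and completeness properties, componentwise contraction, then Banach. The only differences are minor. In (ii) you cite Theorem~\ref{thm:known-guarded-language} for each component (using $m_i\ge m$) where the paper re-derives the componentwise contraction inline through a distinguishing-word argument, and you derive the $2^{-mn}$ rate explicitly where the paper credits it directly to Banach's theorem, whose stated form does not include the rate.
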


\begin{proof}
\textit{(i)~Completeness.}
Non-negativity, symmetry, and the identity of indiscernibles for $d_\infty$
follow immediately from the 
corresponding properties of $d$ on each coordinate.
For the strong triangle inequality, let
$\mathbf{L},\mathbf{M},\mathbf{N}\in\mathcal{L}^k$.
Then
\begin{align*}
d_\infty(\mathbf{L},\mathbf{N})
  &= \max_i d(L_i,N_i)
  \le \max_i\max\{d(L_i,M_i),d(M_i,N_i)\}\\
  &= \max\!\left\{\max_i d(L_i,M_i),\,\max_i d(M_i,N_i)\right\}
  = \max\{d_\infty(\mathbf{L},\mathbf{M}),d_\infty(\mathbf{M},\mathbf{N})\},
\end{align*}
where the first inequality uses the ultrametric property of $d$
(Theorem~\ref{thm:known-language-ultrametric}).
A Cauchy sequence in $(\mathcal{L}^k,d_\infty)$ is Cauchy in each coordinate;
completeness of $(\mathcal{L},d)$ (Theorem~\ref{thm:known-language-complete})
gives coordinatewise limits whose join is the limit in $d_\infty$.
Hence $(\mathcal{L}^k,d_\infty)$ is complete.

\textit{(ii)~Contraction.}
Fix $\mathbf{L},\mathbf{M}\in\mathcal{L}^k$ and a component index $i$.
A word $w$ distinguishes $T_i(\mathbf{L})$ from $T_i(\mathbf{M})$ only if
it distinguishes $u^i_r L_{i} v^i_r$ from $u^i_r M_{i} v^i_r$ for
some $r$, which requires $w=u^i_r w' v^i_r$ with $w'$ distinguishing
$L_{i}$ from $M_{i}$.
Hence $|w|\ge |u^i_r|+|v^i_r|\ge m$, giving
\[
d(T_i(\mathbf{L}),T_i(\mathbf{M}))
  \le 2^{-m}\, d(L_{i},M_{i})
  \le 2^{-m}\,d_\infty(\mathbf{L},\mathbf{M}).
\]
Taking the maximum over $i$ yields
$d_\infty(\mathbf{T}(\mathbf{L}),\mathbf{T}(\mathbf{M}))\le 2^{-m}d_\infty(\mathbf{L},\mathbf{M})$.

\textit{(iii)~Unique fixed point and convergence.}
Since $(\mathcal{L}^k,d_\infty)$ is complete and $\mathbf{T}$ is a contraction,
Theorem~\ref{thm:banach} applies directly, giving a unique fixed point and the
stated convergence rate.
\end{proof}

\subsection{Two-Sided Convergence}

For pre-filter construction it is useful to track convergence from both
the empty language and the full language simultaneously.

\begin{Definition}\label{def:two-sided}
Let $T_{S,G}$ be a single guarded operator with regular seed $S$ and guard
family $G$ of guard length at least $m\ge 1$.
Define the lower and upper
Picard iterates by
\[
L^{(n)}_{\min}=T_{S,G}^n(\emptyset),
\qquad
L^{(n)}_{\max}=T_{S,G}^n(\Sigma^\ast),
\qquad n\ge 0.
\]
\end{Definition}

\begin{Theorem}\label{thm:two-sided}
Under the hypotheses of Definition~\ref{def:two-sided}, for all $n\ge 0$:
\begin{enumerate}[label=\textup{(\roman*)}]
  \item $L^{(n)}_{\min}\subseteq L^\ast\subseteq L^{(n)}_{\max}$.
  \item $d(L^{(n)}_{\min},L^\ast)\le 2^{-mn}$ and
        $d(L^{(n)}_{\max},L^\ast)\le 2^{-mn}$.
  \item Both $L^{(n)}_{\min}$ and $L^{(n)}_{\max}$ are regular.
\end{enumerate}
\end{Theorem}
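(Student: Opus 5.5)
The plan is to prove (i) from monotonicity of $T_{S,G}$, (ii) from Theorem~\ref{thm:known-guarded-language}, and (iii) by induction using closure of regular languages under the operator. Throughout, $L^\ast$ is the unique fixed point of $T_{S,G}$ given by Theorem~\ref{thm:known-guarded-language}.

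For (i), first observe that $T_{S,G}$ is monotone with respect to inclusion: if $L\subseteq M$, then $u_rLv_r\subseteq u_rMv_r$ for each $r$, and hence $T_{S,G}(L)\subseteq T_{S,G}(M)$. Then argue by induction on $n$. For $n=0$ the claim is just $\emptyset\subseteq L^\ast\subseteq\Sigma^\ast$. For the inductive step, apply $T_{S,G}$ to $L^{(n)}_{\min}\subseteq L^\ast\subseteq L^{(n)}_{\max}$ and use monotonicity together with $T_{S,G}(L^\ast)=L^\ast$. This gives $L^{(n+1)}_{\min}\subseteq L^\ast\subseteq L^{(n+1)}_{\max}$.

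For (ii), apply the quantitative estimate of Theorem~\ref{thm:known-guarded-language} with initial languages $L^{(0)}=\emptyset$ and $L^{(0)}=\Sigma^\ast$. This gives
\[
d(L^{(n)}_{\min},L^\ast)\le 2^{-mn}d(\emptyset,L^\ast)
\quad\text{and}\quad
d(L^{(n)}_{\max},L^\ast)\le 2^{-mn}d(\Sigma^\ast,L^\ast).
\]
Since $d$ takes values in $\{0\}\cup\{2^{-k}:k\ge 0\}$, we have $d\le 1$ everywhere, and both bounds reduce to $2^{-mn}$.

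For (iii), use induction on $n$. The base languages $\emptyset$ and $\Sigma^\ast$ are regular. If $L$ is regular, then $T_{S,G}(L)=S\cup\bigcup_r u_rLv_r$ is regular, because $S$ is regular by hypothesis and regular languages are closed under concatenation with fixed words and under finite union. This is the same closure fact already used in Definition~\ref{def:automaton-operator}. Equivalently, $L^{(n)}_{\min}$ and $L^{(n)}_{\max}$ are the languages of the automaton Picard iterates $\mathbb{T}_{S,G}^n$ started at the classes of $\emptyset$ and $\Sigma^\ast$.

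None of these steps is a real obstacle. The only point needing care is stating the monotonicity of $T_{S,G}$ explicitly, since it is not recorded earlier in the paper and it is what drives the sandwich in (i). The metric bounds in (ii) follow directly from the contraction theorem and give no order information on their own.
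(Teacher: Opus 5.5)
Your proposal is correct and follows the paper's proof essentially step for step. Monotonicity of $T_{S,G}$ drives the inductive sandwich in (i). The estimate of Theorem~\ref{thm:known-guarded-language} combined with $d\le 1$ gives (ii), and induction via closure of regular languages under finite union and concatenation with fixed words gives (iii). The only cosmetic difference is that you handle both inclusions of (i) in one simultaneous induction, while the paper treats the lower and upper iterates separately.
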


\begin{proof}
\textit{Monotonicity.}
If $L\subseteq M$ then $u_r L v_r\subseteq u_r M v_r$ for every guard
$(u_r,v_r)$, so $T_{S,G}(L)\subseteq T_{S,G}(M)$.
Thus $T_{S,G}$ is monotone with respect to set inclusion.

\textit{(i)}
We prove $L^{(n)}_{\min}\subseteq L^\ast$ by induction on $n$.
For $n=0$: $\emptyset\subseteq L^\ast$.
Assuming $L^{(n)}_{\min}\subseteq L^\ast$, monotonicity gives
$L^{(n+1)}_{\min}=T_{S,G}(L^{(n)}_{\min})\subseteq T_{S,G}(L^\ast)=L^\ast$.
The inclusion $L^\ast\subseteq L^{(n)}_{\max}$ is symmetric: for $n=0$,
$L^\ast\subseteq\Sigma^\ast$;
the inductive step uses
$L^\ast=T_{S,G}(L^\ast)\subseteq T_{S,G}(L^{(n)}_{\max})=L^{(n+1)}_{\max}$.

\textit{(ii)}
By Theorem~\ref{thm:known-guarded-language}, for any initial language $L^{(0)}$,
$d(T_{S,G}^n(L^{(0)}),L^\ast)\le 2^{-mn}d(L^{(0)},L^\ast)$.
Since $d(\emptyset,L^\ast)\le 1$ and $d(\Sigma^\ast,L^\ast)\le 1$, both
bounds follow immediately.

\textit{(iii)}
We prove regularity of $L^{(n)}_{\min}$ by induction.
$L^{(0)}_{\min}=\emptyset$ is regular.
If $L^{(n)}_{\min}$ is regular,
then $T_{S,G}(L^{(n)}_{\min})=S\cup\bigcup_r u_r L^{(n)}_{\min} v_r$
is regular, since regular languages are closed under finite union and
concatenation with fixed words, and $S$ is regular by assumption.
The argument for $L^{(n)}_{\max}$ is identical, starting from $\Sigma^\ast$.
\end{proof}

\begin{Remark}
The estimate in Theorem~\ref{thm:two-sided}(ii) should be interpreted in terms of finite-depth agreement. 
The lower iterates $(L^{(n)}_{\min})$ form an increasing sequence of sound under-approximations of $L^\ast$; no accepted word is ever removed, and after $n$ iterations all words of $L^\ast$ of length strictly less than $mn$ have already appeared. 
The upper iterates $(L^{(n)}_{\max})$ form a decreasing sequence of over-approximations of $L^\ast$: the removed words are outside the current certified over-approximation, and after $n$ iterations all spurious words of length strictly less than $mn$ have been eliminated. 
Thus the first possible disagreement with $L^\ast$ is pushed to length at least $mn$.
\end{Remark}

\subsection{Soundness of the Iteration-from-Below}

\begin{Theorem}\label{thm:soundness}
For all $n\ge 0$, $L^{(n)}_{\min}\subseteq L^\ast$.
Consequently, if the DFA $[\mathcal{A}^{(n)}]$ recognizes $L^{(n)}_{\min}$,
then every word accepted by $[\mathcal{A}^{(n)}]$ belongs to $L^\ast$.
Moreover, for every $v\in L^\ast$ with $|v|<mn$, we have
$v\in L^{(n)}_{\min}$.
\end{Theorem}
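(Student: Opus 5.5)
The plan is to reduce all three assertions to Theorem~\ref{thm:two-sided}, working in the setting of Definition~\ref{def:two-sided}: a single guarded operator $T_{S,G}$ with regular seed $S$ and guard length at least $m\ge 1$, with $L^\ast$ its unique fixed point.

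The inclusion $L^{(n)}_{\min}\subseteq L^\ast$ is exactly Theorem~\ref{thm:two-sided}(i). I will nevertheless recall its short proof so the argument is self-contained. First, $T_{S,G}$ is monotone for inclusion, because $L\subseteq M$ gives $u_rLv_r\subseteq u_rMv_r$ for every guard. Then induct on $n$, starting from $\emptyset\subseteq L^\ast$ and applying monotonicity together with $T_{S,G}(L^\ast)=L^\ast$.

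The statement about the DFA follows at once. Since $L^{(n)}_{\min}$ is regular by Theorem~\ref{thm:two-sided}(iii), a class $[\mathcal{A}^{(n)}]$ recognizing it exists. Its accepted words are exactly $L([\mathcal{A}^{(n)}])=L^{(n)}_{\min}$, and this set lies in $L^\ast$ by the inclusion just proved.

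For the completeness claim, take $v\in L^\ast$ with $|v|<mn$. By Theorem~\ref{thm:two-sided}(ii), $d(L^{(n)}_{\min},L^\ast)\le 2^{-mn}$. I split into two cases.
\begin{itemize}
\item If the distance is $0$, the languages are equal and $v\in L^{(n)}_{\min}$ trivially.
\item Otherwise the distance equals $2^{-k}$, where $k$ is the length of the shortest distinguishing word. Then $2^{-k}\le 2^{-mn}$ forces $k\ge mn$, exactly as in the proof of Corollary~\ref{cor:automaton-certified-depth}. So the two languages agree on every word of length less than $mn$, and $v\in L^\ast$ gives $v\in L^{(n)}_{\min}$.
\end{itemize}
For $n=0$ the condition $|v|<0$ is vacuous.

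I expect no real obstacle, since everything is a direct consequence of earlier results. The only point needing care is the conversion of the metric bound into agreement on words of length strictly less than $mn$ (as opposed to $\le mn$), and this is handled by the case split above.
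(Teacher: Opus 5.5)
Your proposal is correct and follows the paper's proof: the inclusion and the DFA claim come from Theorem~\ref{thm:two-sided}(i), and the completeness claim comes from the $2^{-mn}$ distance bound converted into agreement on all words of length strictly less than $mn$. The only cosmetic difference is that the paper cites Corollary~\ref{cor:automaton-certified-depth} with $N=mn$, treating $n=0$ separately because that corollary requires $N\ge 1$, whereas you start from Theorem~\ref{thm:two-sided}(ii) and inline the corollary's metric-to-agreement argument.
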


\begin{proof}
The inclusion $L^{(n)}_{\min}\subseteq L^\ast$ is part~(i) of
Theorem~\ref{thm:two-sided}, proved above.
The DFA statement is immediate.
For the completeness claim, the case $n=0$ is vacuous, since there is no word $v$ with $|v|<0$. Assume now that $n\ge 1$. Apply Corollary~\ref{cor:automaton-certified-depth} with initial language $L^{(0)}_{\min}=\emptyset$ and with $N=mn$. Since $d(\emptyset,L^\ast)\le 1$, the certified agreement condition $2^{-mn}\,d(\emptyset,L^\ast)\le 2^{-N}$ is satisfied.
Hence $\mathbf{1}_{L^{(n)}_{\min}}(w)=\mathbf{1}_{L^\ast}(w)$ for all $w$ with
$|w|<mn$.  Thus if $v\in L^\ast$ and $|v|<mn$, then
$\mathbf{1}_{L^{(n)}_{\min}}(v)=\mathbf{1}_{L^\ast}(v)=1$, giving
$v\in L^{(n)}_{\min}$.
\end{proof}

\subsection{State Complexity of Linearly Nested Bracket Operators}

We now determine the exact number of states of the minimal DFA for
$L^{(n)}_{\min}$ when the guarded operator has Dyck-like structure.
This is the key result governing the feasibility of the pre-filter construction.

\begin{Definition}\label{def:dyck-operator}
Let $k\ge 1$ and let $\Sigma$ contain $2k$ distinct bracket symbols
$a_1,b_1,\dots,a_k,b_k$.  Set
$\Sigma_{\mathrm{safe}}=\Sigma\setminus\{a_1,b_1,\dots,a_k,b_k\}$.
The $k$-bracket guarded operator is
\(
T(L)=\Sigma_{\mathrm{safe}}^\ast
\cup\bigcup_{i=1}^k a_i\,L\,b_i.
\)
The guard length is $m=2$.  We write $D^{(n)}_k=T^n(\emptyset)$ for the
$n$-th lower iterate.
\end{Definition}

\begin{Lemma}\label{lem:dyck-characterisation}
$D^{(n)}_k$ is the set of all \emph{single-chain} (linearly nested) words
\[
  a_{i_1}a_{i_2}\cdots a_{i_t}\,w\,b_{i_t}\cdots b_{i_2}b_{i_1},
  \qquad
  0\le t<n,\quad i_1,\dots,i_t\in\{1,\dots,k\},\quad w\in\Sigma_{\mathrm{safe}}^\ast,
\]
that is, words consisting of a single nested chain of $t$ matching bracket
pairs, with $t<n$, enclosing a core $w$ of safe symbols.
In particular,
$D^{(n)}_k$ is \emph{not} the set of all well-nested words of nesting depth
$<n$: words made of two or more concatenated sibling bracketed blocks, such
as $a_1b_1a_1b_1$, are well-nested of depth~$1$ but do not belong to any
$D^{(n)}_k$.
\end{Lemma}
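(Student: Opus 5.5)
The plan is to prove the characterisation by induction on $n$, after introducing notation for the right-hand side. For $n\ge 0$ let
\[
C^{(n)}=\{\,a_{i_1}\cdots a_{i_t}\,w\,b_{i_t}\cdots b_{i_1}: 0\le t<n,\ i_1,\dots,i_t\in\{1,\dots,k\},\ w\in\Sigma_{\mathrm{safe}}^\ast\,\},
\]
so that $C^{(0)}=\emptyset$ (no admissible $t$). We show $D^{(n)}_k=C^{(n)}$ for all $n\ge 0$.

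\emph{Base case and recursion for $C$.} The base case is $D^{(0)}_k=T^0(\emptyset)=\emptyset=C^{(0)}$. For the inductive step, assume $D^{(n)}_k=C^{(n)}$. Then
\[
D^{(n+1)}_k=T(C^{(n)})=\Sigma_{\mathrm{safe}}^\ast\cup\bigcup_{i=1}^k a_i C^{(n)} b_i,
\]
so it suffices to verify $C^{(n+1)}=\Sigma_{\mathrm{safe}}^\ast\cup\bigcup_i a_iC^{(n)}b_i$.

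\emph{Forward inclusion.} Take a chain word in $C^{(n+1)}$ with depth $t\le n$. If $t=0$, it is a word $w\in\Sigma_{\mathrm{safe}}^\ast$. If $t\ge1$, write it as $a_{i_1}\,x\,b_{i_1}$, where $x$ is the inner chain of depth $t-1<n$; then $x\in C^{(n)}$.

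\emph{Reverse inclusion.} Every word of $\Sigma_{\mathrm{safe}}^\ast$ is a depth-$0$ chain, and $0<n+1$. If $x\in C^{(n)}$ has depth $t<n$, then wrapping it as $a_i x b_i$ gives a chain of depth $t+1<n+1$.

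\emph{Main obstacle.} The only delicate point is well-definedness, not the set equality itself. The chain representation should be unique: since the $a_i$, the $b_i$ and the safe symbols are pairwise distinct, $t$ equals the number of leading bracket-open symbols, and both $w$ and the indices are then determined. Uniqueness is not needed for the equality, though it is reassuring, and the forward decomposition above uses only existence.

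\emph{The "in particular" claim.} Words in any $D^{(n)}_k$ have the form (opening brackets)(safe symbols)(closing brackets). The word $a_1b_1a_1b_1$ contains an $a_1$ after a $b_1$, so it matches no chain pattern and lies in no $D^{(n)}_k$. It is nevertheless well-nested of depth $1$.
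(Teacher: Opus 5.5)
Your proof is correct and follows the same route as the paper: induction on $n$ via $D^{(n+1)}_k=\Sigma_{\mathrm{safe}}^\ast\cup\bigcup_i a_i\,D^{(n)}_k\,b_i$, with the core case $t=0$ handled separately and one outer bracket pair wrapped or peeled when $t\ge 1$. Your explicit peeling $a_{i_1}\,x\,b_{i_1}$, with $x$ a chain of depth $t-1<n$, makes precise the converse step that the paper attributes to ``structural induction on the strings.'' Your observation that no opening symbol can follow a closing symbol in a chain word justifies the $a_1b_1a_1b_1$ claim more cleanly than the paper's informal linearity remark.
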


\begin{proof}
By induction on $n$.  For $n=0$: $D^{(0)}_k=\emptyset$, which is the empty
set of chains ($t<0$ is impossible).
For $n=1$:
$T(\emptyset)=\Sigma_{\mathrm{safe}}^\ast$, which is exactly the set of
chains with $t=0$ (the core $w$ alone).
For the inductive step, assume the
claim for $n$.  Then
\[
D^{(n+1)}_k=\Sigma_{\mathrm{safe}}^\ast\cup\bigcup_i a_i\,D^{(n)}_k\,b_i.
\]
A word belongs to $D^{(n+1)}_k$ iff either it lies in
$\Sigma_{\mathrm{safe}}^\ast$ (a chain with $t=0$), or it has the form
$a_i x b_i$ with $x\in D^{(n)}_k$.
By the induction hypothesis $x$ is a
single chain of $t'<n$ pairs with a safe core, hence $a_i x b_i$ is a single
chain of $t'+1\le n$ pairs with the same safe core.
Conversely, by applying structural induction on the strings, every chain
of $1\le t<n+1$ pairs strictly has this form.  This proves the claim for $n+1$.
The key structural point is that each guarded production
$L\mapsto a_i\,L\,b_i$ wraps a \emph{single} occurrence of the recursion
variable in one bracket pair;
the operator is therefore \emph{linear}, and
it cannot generate words in which two completed bracketed blocks appear side
by side.
This is consistent with Example~\ref{ex:nonregular-fixed-point},
where the same operator (for $k=1$, $\Sigma_{\mathrm{safe}}=\emptyset$) has
fixed point $\{a^kb^k:k\ge 0\}$ rather than the full Dyck language.
\end{proof}

\begin{Theorem}\label{thm:dyck-dfa-size}
Let $D^{(n)}_k$ be as in Definition~\ref{def:dyck-operator}, with $n\ge 1$,
and assume the safe-free case $\Sigma_{\mathrm{safe}}=\emptyset$.
\begin{enumerate}[label=\textup{(\roman*)}]
  \item If $k=1$, the minimal DFA for $D^{(n)}_1$ has exactly $2n$ states.
  \item If $k\ge 2$, the minimal DFA for $D^{(n)}_k$ has exactly
        \[
          \frac{k^{n}-1}{k-1}+\frac{k^{n-1}-1}{k-1}+1
        \]
        states.
\end{enumerate}
The formula in \textup{(i)} can be viewed rigorously as the limit as $k \to 1$ of the formula in \textup{(ii)} (since $\lim_{k\to 1} \frac{k^p-1}{k-1} = p$, the fractions evaluate to $n$ and $n-1$, giving $2n$).
\end{Theorem}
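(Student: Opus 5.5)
The plan is to count Myhill--Nerode classes, i.e.\ the distinct nonempty left quotients $x^{-1}D^{(n)}_k=\{\,y:xy\in D^{(n)}_k\,\}$ together with one empty quotient (the dead state). The number of states of the minimal DFA equals the number of distinct quotients. By Lemma~\ref{lem:dyck-characterisation} with $\Sigma_{\mathrm{safe}}=\emptyset$, so that $\Sigma=\{a_1,b_1,\dots,a_k,b_k\}$, we have
\[
D^{(n)}_k=\{\,a_{i_1}\cdots a_{i_t}b_{i_t}\cdots b_{i_1}:0\le t<n\,\}.
\]
For a word $s=a_{i_1}\cdots a_{i_j}$ write $\bar s=b_{i_j}\cdots b_{i_1}$. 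The first step is to classify the prefixes $x$ of words of $D^{(n)}_k$; every other $x$ has empty quotient. There are two kinds.

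\emph{Opening prefixes.} These are $x=s$ with $s\in\{a_1,\dots,a_k\}^j$ and $0\le j\le n-1$. Their quotient is $s^{-1}D^{(n)}_k=\{\,s'\,\overline{ss'}\,\bar s\text{-completion}:\ |s|+|s'|<n\,\}$; concretely, it consists of all words $s'\,\overline{s'}\,\bar s$ with $|s'|\le n-1-j$.

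\emph{Closing prefixes.} These are $x=s\,y$ where $y$ is a nonempty proper-or-full prefix of $\bar s$, i.e.\ $y$ consumes $j-r$ closing letters with $j-r\ge 1$. The quotient is then the singleton $\{z\}$, where $z$ is the remaining suffix of $\bar s$, a $b$-word of length $r$ with $0\le r\le j-1\le n-2$.

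Next I would show that these quotients are pairwise distinct, with the following checks.
\begin{itemize}
\item Two opening prefixes of different lengths $j\neq j'$ are separated by the shortest word in their quotients, which has length $j$ and $j'$ respectively.
\item Two opening prefixes $s\ne s'$ of the same length are separated by $\bar s$, which lies in $s^{-1}D^{(n)}_k$ but not in $s'^{-1}D^{(n)}_k$.
\item Closing quotients $\{z\}$ are distinct for distinct $z$.
\item An opening quotient equals a singleton only when $j=n-1$, where it is $\{\bar s\}$ with $|\bar s|=n-1$. This cannot coincide with a closing quotient, since closing quotients have $|z|\le n-2$.
\item The opening quotient for $j=0$ is the whole language, which for $n\ge2$ contains at least two words and hence differs from $\{\varepsilon\}$. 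For $n=1$ no closing prefixes exist at all, so no collision can occur.
\end{itemize}
Finally, every $b$-word $z$ with $|z|\le n-2$ is realised as a closing quotient: take $j=|z|+1$ and choose $s$ with $\bar s=b_{i}z$.

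Counting then gives
\[
\sum_{j=0}^{n-1}k^j \;+\;\sum_{r=0}^{n-2}k^r \;+\;1,
\]
where the first sum counts opening classes, the second counts closing classes, and the final $1$ is the dead state. For $k\ge2$ this is exactly the formula in (ii). For $k=1$ it equals $n+(n-1)+1=2n$, which proves (i). The limit remark follows immediately from $\sum_{j<p}k^j\to p$ as $k\to1$.

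The main obstacle is the bookkeeping at the boundary cases: the maximal opening depth $j=n-1$ (whose quotient is a singleton), the empty remaining suffix $r=0$, and the degenerate case $n=1$. Each of these must be checked not to merge two classes or to produce a spurious class. I would handle all three explicitly in the distinctness step, using word length inside the quotient as the primary separating invariant.
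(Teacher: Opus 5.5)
Your proposal is correct and takes essentially the same route as the paper: it counts Myhill--Nerode classes split into opening-phase classes (one per $s\in\{a_1,\dots,a_k\}^{\le n-1}$), closing-phase classes (one per pending closer word of length at most $n-2$), and a single dead class, arriving at the same sum. Your distinctness checks, which separate quotients by their cardinality and by the length of their shortest word, are more careful than the paper's. The paper separates the two phases only by noting that a closing prefix can never reopen, and it does not explicitly treat the boundary case $j=n-1$, where the opening quotient $\{\bar s\}$ is itself a singleton.
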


\begin{proof}
By Lemma~\ref{lem:dyck-characterisation} (with $\Sigma_{\mathrm{safe}}=\emptyset$),
$D^{(n)}_k$ is the set of single chains
$a_{i_1}\cdots a_{i_t}\,b_{i_t}\cdots b_{i_1}$ with $0\le t<n$.
We count the Myhill--Nerode classes directly; this is the only structure
the recogniser must track, and it yields a minimal DFA.
\medskip
\textit{Valid prefixes.}
A nonempty prefix of a word in $D^{(n)}_k$ is either in an \emph{opening
phase} or a \emph{closing phase}.
An opening-phase prefix has the form
$a_{i_1}\cdots a_{i_s}$ with $0\le s\le n-1$: only opening brackets have been
read so far.
A closing-phase prefix has the form
$a_{i_1}\cdots a_{i_t}\,b_{i_t}\cdots b_{i_{s+1}}$, i.e.\ a chain that has
begun closing and still has the pending closer stack
$(a_{i_s},\dots,a_{i_1})$ to discharge, with $1\le t\le n-1$.
Because the operator is linear (Lemma~\ref{lem:dyck-characterisation}), once
a prefix has started closing it can never open again, so opening- and
closing-phase prefixes are never equivalent.
\medskip
\textit{Opening-phase classes.}
Two opening prefixes $a_{i_1}\cdots a_{i_s}$ and $a_{j_1}\cdots a_{j_r}$ are
equivalent iff they are identical: the only accepting continuation is the
matching reverse closer $b_{i_s}\cdots b_{i_1}$, so any difference in the
opening sequence (in length or in some symbol) is exposed by a suitable
closing suffix, exactly as in the stack-content argument.
There is one such
class for each word in $\{a_1,\dots,a_k\}^{\le n-1}$, giving
\(
  \sum_{s=0}^{n-1}k^s=\frac{k^{n}-1}{k-1}
\)
classes (the case $s=0$ is the initial prefix~$\varepsilon$, which is
accepting since $\varepsilon\in D^{(n)}_k$).
\medskip
\textit{Closing-phase classes.}
A closing-phase prefix is determined, up to equivalence, by its pending
closer stack $(a_{i_s},\dots,a_{i_1})$ of length $s$: the only accepting
continuation is $b_{i_s}\cdots b_{i_1}$, and a mismatch in the pending stack
is again exposed by the appropriate closing suffix.
The pending stack has
length $s$ with $0\le s\le n-2$ (it arises from a chain of length
$t\le n-1$ after at least one close, so $s\le t-1\le n-2$).
The case $s=0$
is the single ``fully closed'' accepting class, which is distinct from the
initial prefix because no further symbol may follow a completed chain.
This
gives
\(
  \sum_{s=0}^{n-2}k^s=\frac{k^{n-1}-1}{k-1}
\)
classes.

\medskip
\textit{Dead class.}
All remaining prefixes (an unmatched closer, a mismatched closer, or an
opening that would exceed depth $n-1$) are rejecting with no accepting
continuation, forming a single dead class.
\medskip
\textit{Total.}
Summing the three groups, the number of Myhill--Nerode classes is
\(
  \frac{k^{n}-1}{k-1}+\frac{k^{n-1}-1}{k-1}+1 .
\)
Every class is reachable (the opening prefix $a_{i_1}\cdots a_{i_s}$ and its partial closings realise each class), so the DFA with one state per class is minimal.
For $k=1$ the two fractions evaluate in the limit to $n$ and $n-1$, giving $2n$, which proves~(i); the general case proves~(ii).
\end{proof}

\begin{Remark}\label{rem:safe-symbols}
Theorem~\ref{thm:dyck-dfa-size} is stated for the safe-free case
$\Sigma_{\mathrm{safe}}=\emptyset$, which isolates the bracket-nesting
contribution to the state count. When $\Sigma_{\mathrm{safe}}\neq\emptyset$
the safe symbols may be inserted into the core $w$ at the innermost level of a chain. Each open-chain prefix then additionally needs a state that loops on $\Sigma_{\mathrm{safe}}$ at its current depth, so a nonempty safe alphabet adds a bounded number of further ``core'' states per nesting level; the exact count depends on how safe-symbol loops merge at the maximal depth $n-1$. The exponential order in $n$ for $k\ge 2$ is unaffected.  The state counts reported in Table~\ref{tab:state-counts} are the safe-free values; the SQL operators of Section~\ref{sec:implementation} carry safe symbols, so their deployed DFAs are correspondingly larger by the additive core-state term.
\end{Remark}

\subsection{Depth-Capped Approximation}

Theorem~\ref{thm:dyck-dfa-size} shows that for $k\ge 2$ the DFA grows
exponentially in the iteration depth $n$.
We resolve this by capping the nesting depth at a constant $D$ independent of $n$. Capping depth and repetition constraints is a known strategy in regex matching and automata construction to avoid state-space explosion and mitigate ReDoS vulnerabilities, as seen in counting-set automata~\cite{Turonova2020}.

\begin{Definition}\label{def:depth-capped}
For integers $n\ge 1$ and $D\ge 1$, define the depth-capped approximations
\[
L^{(n,D)}_{\min}
  = L^{(n)}_{\min}\cap\{w\in\Sigma^\ast:\text{nesting depth of }w\le D\},
\]
\[
L^{(n,D)}_{\max}
  = L^{(n)}_{\max}\cap\{w\in\Sigma^\ast:\text{nesting depth of }w\le D\}.
\]
\end{Definition}

\begin{Theorem}\label{thm:depth-capped}
Let $T$ be the $k$-bracket guarded operator of Definition~\ref{def:dyck-operator}, and let $L^{(n,D)}_{\min}$ be its depth-capped lower approximation as in Definition~\ref{def:depth-capped}.
\begin{enumerate}[label=\textup{(\roman*)}]
  \item \emph{Soundness}: $L^{(n,D)}_{\min}\subseteq L^\ast$ for all $n,D$.
  \item \emph{DFA size}: for the safe-free single-chain language generated by Definition~\ref{def:dyck-operator}, i.e. when $\Sigma_{\mathrm{safe}}=\emptyset$, the minimal DFA for $L^{(n,D)}_{\min}$ has exactly
        \[
          \frac{k^{D+1}-1}{k-1}+\frac{k^{D}-1}{k-1}+1
        \]
        states, independent of~$n$ for $n\ge D+1$ (equal to $2(D+1)$ when
        $k=1$).
A nonempty safe alphabet adds the bounded core-state term of
        Remark~\ref{rem:safe-symbols}.
  \item \emph{Exactness}: for $n\ge D+1$,
        $L^{(n,D)}_{\min}=L^\ast\cap\{w:\text{nesting depth of }w\le D\}$.
  \item \emph{Characterisation of incompleteness}: for $n\ge D+1$,
        $L^\ast\setminus L^{(n,D)}_{\min}
         =\{w\in L^\ast:\text{nesting depth of }w>D\}$.
\end{enumerate}
\end{Theorem}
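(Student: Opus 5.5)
The argument rests on the characterisation in Lemma~\ref{lem:dyck-characterisation}. I first fix the meaning of ``nesting depth'': for a word in $L^\ast$ it is the number $t$ of bracket pairs in its single chain. By Lemma~\ref{lem:dyck-characterisation} and Theorem~\ref{thm:two-sided}, $L^\ast=\bigcup_n D^{(n)}_k$ is exactly the set of single chains $a_{i_1}\cdots a_{i_t}\,w\,b_{i_t}\cdots b_{i_1}$ with $t\ge 0$ arbitrary and $w\in\Sigma_{\mathrm{safe}}^\ast$. To see this, note that each such chain lies in $D^{(t+1)}_k\subseteq L^\ast$ by Theorem~\ref{thm:soundness}. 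Conversely, any $v\in L^\ast$ lies in $D^{(n)}_k$ for every $n$ with $2n>|v|$, again by Theorem~\ref{thm:soundness}. Hence the intersection with the depth-$\le D$ set keeps precisely the chains with $t\le D$.

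\emph{Part (i).} This is immediate: $L^{(n,D)}_{\min}\subseteq L^{(n)}_{\min}\subseteq L^\ast$ by Theorem~\ref{thm:two-sided}(i).

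\emph{Part (iii).} For $n\ge D+1$, every chain with $t\le D$ satisfies $t<n$, so it lies in $D^{(n)}_k$. Therefore
\[
L^\ast\cap\{\text{depth}\le D\}\subseteq L^{(n)}_{\min}\cap\{\text{depth}\le D\}=L^{(n,D)}_{\min}.
\]
The reverse inclusion follows from part (i).

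\emph{Part (iv).} Take the set difference of $L^\ast$ with the equality in (iii). Every element of $L^\ast$ has some depth, either $\le D$ or $>D$, so the difference is exactly the set of elements of depth $>D$.

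\emph{Part (ii).} By (iii), for $n\ge D+1$ the language $L^{(n,D)}_{\min}$ equals the set of chains with $t\le D$, which is $D^{(D+1)}_k$ by Lemma~\ref{lem:dyck-characterisation}. This is independent of $n$. Theorem~\ref{thm:dyck-dfa-size} with $n$ replaced by $D+1$ then gives
\[
\frac{k^{D+1}-1}{k-1}+\frac{k^{D}-1}{k-1}+1
\]
states for $k\ge 2$, and $2(D+1)$ states for $k=1$. The remark about safe symbols is inherited from Remark~\ref{rem:safe-symbols} without change.

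The main obstacle is a definitional one: the depth cap must be read so that, on single chains, depth coincides with the chain length $t$. The usual maximal-bracket-nesting reading does this. I would state this explicitly and check that the cap never cuts a chain in the middle, since membership in $D^{(n)}_k$ already forces the word to be a well-formed single chain. After that, every step reduces to results already proved.
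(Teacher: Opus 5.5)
Your proposal is correct and follows the paper's proof essentially step by step. Part (i) comes from $L^{(n,D)}_{\min}\subseteq L^{(n)}_{\min}\subseteq L^\ast$, part (iii) from the fact that every single chain of depth $\le D$ already lies in $D^{(n)}_k$ once $n\ge D+1$, part (iv) by set difference, and part (ii) by identifying $L^{(n,D)}_{\min}$ with $D^{(D+1)}_k$ and invoking Theorem~\ref{thm:dyck-dfa-size}. The only minor variation is how you get $L^\ast=\bigcup_n D^{(n)}_k$: you use the certified-depth clause of Theorem~\ref{thm:soundness} (any $v\in L^\ast$ lies in $D^{(n)}_k$ once $2n>|v|$), whereas the paper checks that the union is $T$-invariant and appeals to uniqueness of the fixed point; both arguments are valid.
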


\begin{proof}
\textit{(i)} Follows immediately from Theorem~\ref{thm:soundness}, since
$L^{(n,D)}_{\min}\subseteq L^{(n)}_{\min}\subseteq L^\ast$.

\textit{(ii)} For $n\ge D+1$, the constraint ``depth $<n$'' in $L^{(n)}_{\min}$
is dominated by ``depth $\le D$'', so $L^{(n,D)}_{\min}$ is exactly the set of
single chains of length $t\le D$ (Lemma~\ref{lem:dyck-characterisation}).
This is the language $D^{(D+1)}_k$, so by Theorem~\ref{thm:dyck-dfa-size}
applied with $n$ replaced by $D+1$, the safe-free minimal DFA has
$(k^{D+1}-1)/(k-1)+(k^{D}-1)/(k-1)+1$ states, independent of $n$.

\textit{(iii)} 
Before proving exactness, observe that $L^\ast=\bigcup_{q\ge 0}D_k^{(q)}$. Indeed, by Lemma~\ref{lem:dyck-characterisation}, the increasing union $\bigcup_{q\ge 0}D_k^{(q)}$ is precisely the language of all finite single chains generated by the operator of Definition~\ref{def:dyck-operator}. This union is fixed by $T$: applying $T$ either produces a safe core or wraps an already generated finite chain in one additional matching bracket pair. Hence, by uniqueness of the guarded fixed point, this union is equal to $L^\ast$.

The inclusion $L^{(n,D)}_{\min}\subseteq L^\ast\cap\{\text{depth}\le D\}$ holds by~(i). For the reverse inclusion, let $w\in L^\ast$ with $\mathrm{depth}(w)\le D$. By the preceding observation and by Lemma~\ref{lem:dyck-characterisation}, $w$ is a single chain of depth at most $D$. Since $n\ge D+1$, this chain belongs to $D^{(n)}_k=L^{(n)}_{\min}$. Together with $\mathrm{depth}(w)\le D$, this gives $w\in L^{(n,D)}_{\min}$. Therefore $L^{(n,D)}_{\min}=L^\ast\cap\{w:\text{depth}(w)\le D\}$.

\textit{(iv)} Follows from~(iii) by set difference:
$L^\ast\setminus L^{(n,D)}_{\min}
 =L^\ast\setminus(L^\ast\cap\{\text{depth}\le D\})
 =L^\ast\cap\{\text{depth}>D\}$.
\end{proof}

% ----------------------------------------------------------------
\section{Application to Structural Input Validation}
\label{sec:implementation}
% ----------------------------------------------------------------

The theoretical framework developed in the preceding sections admits a
direct application to the problem of SQL injection prevention.
Unlike traditional parse tree validation~\cite{Buehrer2005} or monitoring architectures~\cite{HalfondOrso2005} that operate at runtime, we describe a compile-time validation architecture---the \emph{Fixed-Point SQL Guard}---in
which every design decision is grounded in one of the proved theorems.

\subsection{Modelling SQL Parameter Types as Guarded Operators}

Each SQL parameter type is modelled as a component of a system of guarded
operators in the sense of Definition~\ref{def:system-guarded}.
The four base types and their operators are as follows.

\begin{enumerate}[label=\textup{(\roman*)}]

\item \emph{Integer.}
$T_1(L)=\{0,\dots,9\}\cup\bigcup_{a\in\{0,\dots,9\}}aL$.
Seed $S_1=\{0,\dots,9\}$, one guard $(a,\varepsilon)$ per digit, $m_1=1$.
Fixed point $L^\ast_1=\{0,\dots,9\}^+$.
\emph{Regularity:} By Theorem~\ref{thm:automaton-fixed-point-iff-regular}, the fixed point is represented by a finite DFA. The Picard iteration converges to this validator in the distinguishing-word ultrametric, but it does not in general reach the exact language after finitely many steps. Starting from the empty language, the $n$-th Picard iterate contains the words of the target language up to a finite depth determined by $n$, and hence approximates the regular fixed point by finite-depth prefixes.

\item \emph{Safe string.}
Let $R_{\mathrm{safe}}\subseteq\Sigma^\ast$ be the regular language of strings containing none of the forbidden tokens \texttt{'}, \texttt{;}, \texttt{--}, \texttt{/*}, \texttt{*/}. Equivalently, $R_{\mathrm{safe}}$ is recognised by the finite automaton that scans the input and rejects as soon as one of these forbidden tokens is encountered. We use $R_{\mathrm{safe}}$ as the safe-string component, setting $T_2(L) = R_{\mathrm{safe}}$. This keeps the model at the level of regular languages and avoids treating multi-character tokens as alphabet symbols. In this component, no recursive nesting is needed: the regular seed $R_{\mathrm{safe}}$ already gives the intended finite-state validator. Thus this component is included in the product system as a regular finite-state component, while the nonregular behaviour of the application comes from the parenthesised-expression component.

\item \emph{Parenthesised expression.}
Let $C\subseteq\Sigma$ be the safe core alphabet, excluding the bracket symbols used by the operator. For one bracket pair, define $T_3(L)=C^\ast\cup \texttt{(}L\texttt{)}$. The seed is $S_3=C^\ast$, and the unique guard is $(\texttt{(},\texttt{)})$, so $m_3=2$. The fixed point is the single-chain language $L^\ast_3=\{\,\texttt{(}^t w\,\texttt{)}^t : t\ge 0,\ w\in C^\ast\,\}$. This is a linearly nested language generated by the chosen guarded operator; it is not the full Dyck language of all well-nested parenthesised strings.
\emph{Non-regularity:} By the pumping lemma, $L^\ast_3$ is not regular:
the words $\texttt{(}^n\texttt{)}^n$ are in $L^\ast_3$ but cannot be
pumped within $L^\ast_3$ for any pumping length.
By Theorem~\ref{thm:automaton-fixed-point-iff-regular} no finite
automaton can represent $L^\ast_3$ exactly; the depth-capped pre-filter
(Theorem~\ref{thm:depth-capped}) is used instead.

\item \emph{Identifier.}
Let $B=\{a,\dots,z,A,\dots,Z,\_\}$ be the set of admissible first characters, and let $A=\{a,\dots,z,A,\dots,Z,0,\dots,9,\_\}$ be the set of admissible subsequent characters. We define $T_4(L)=B\cup\bigcup_{a\in A}La$. The seed is $S_4=B$, and the guards are $(\varepsilon, a)$, one for each $a\in A$, so $m_4=1$. The fixed point is $L^\ast_4=BA^\ast$, the language of non-empty identifiers whose first character is a letter or underscore and whose remaining characters are letters, digits, or underscore.
\emph{Regularity:} By Theorem~\ref{thm:automaton-fixed-point-iff-regular}, the fixed point is represented by a finite DFA. The Picard iteration converges to this validator in the distinguishing-word ultrametric, but it does not in general reach the exact language after finitely many steps. Starting from the empty language, the $n$-th Picard iterate contains the words of the target language up to a finite depth determined by $n$, and hence approximates the regular fixed point by finite-depth prefixes.

\end{enumerate}

By Theorem~\ref{thm:system-fixed-point}, the joint system
$(T_1,T_2,T_3,T_4)$ has a unique fixed point vector $(L^\ast_1,\dots,L^\ast_4)$
in $(\mathcal{L}^4,d_\infty)$, with global contraction rate
$m=\min(m_1,\infty,m_3,m_4)=1$.

\subsection{DFA Construction via Picard Iteration}

Algorithm~\ref{alg:build-dfa} constructs the minimal DFA for
$L^{(n)}_{\min,i}$ at depth $n$ and for the depth-capped approximation
$L^{(n,D)}_{\min,i}$.
Its correctness is guaranteed by
Theorems~\ref{thm:two-sided} and~\ref{thm:depth-capped}.

\begin{algorithm}[ht]
\caption{Build minimal pre-filter DFA}
\label{alg:build-dfa}
\begin{algorithmic}[1]
\Require Guarded operator $T_i$; depth $n$;
optional cap $D$
\Ensure  Minimal DFA $\mathcal{A}$ recognising $L^{(n)}_{\min,i}$
         (or $L^{(n,D)}_{\min,i}$ if $D$ is supplied)
\State $L \leftarrow \emptyset$
\For{$\text{step} = 1$ \textbf{to} $n$}
  \State $L_{\mathrm{new}} \leftarrow S_i$
  \For{each guard $(u^i_r, v^i_r)\in G_i$}
    \State $L_{\mathrm{new}} \leftarrow L_{\mathrm{new}} \cup
            \text{Concat}(u^i_r,\, L,\, v^i_r)$
  \EndFor
  \State $L \leftarrow \text{Minimise}(L_{\mathrm{new}})$
\EndFor
\If{$D$ is supplied}
  \State $L \leftarrow L \cap \{w : \mathrm{depth}(w)\le D\}$
  \State $L \leftarrow \text{Minimise}(L)$
\EndIf
\State \Return $\text{Minimise}(L)$
\end{algorithmic}
\end{algorithm}

Similarly, the upper iterate $L^{(n)}_{\max,i}$ is obtained by
replacing $\emptyset$ with $\Sigma^\ast$ on 
line~1.

\begin{Remark}[Intermediate DFA size]\label{rem:intermediate-dfa}
Algorithm~\ref{alg:build-dfa} calls \textsc{Minimise} after every iteration
step.  Without minimisation, the concatenation $\text{Concat}(u,L,v)$ applied
to a DFA of $s$ states yields an NFA of $s+|u|+|v|$ states, and the
subsequent determinisation can in the worst case produce $2^s$ states before
minimisation reduces it again.
In practice, for the SQL parameter operators of
Section~\ref{sec:implementation}, the iterates grow slowly: by
Theorem~\ref{thm:dyck-dfa-size} the \emph{final} minimal DFA at depth $n$ has
at most $(k^n-1)/(k-1)+(k^{n-1}-1)/(k-1)+1=O(k^n/(k-1))$ states, so the
minimised intermediate DFAs at step $j<n$ are no larger than the final
size---no step exceeds it.
Minimisation after each step therefore keeps the working DFA bounded by the
final size throughout, and the total work across all $n$ steps is
$O\!\left(n\cdot(k^n/(k-1))\cdot|\Sigma|\right)$ transitions.
For the depth-capped variant ($n\ge D+1$), this stabilises at
$O\!\left(n\cdot(k^{D+1}/(k-1))\cdot|\Sigma|\right)$, which is linear in $n$
and exponential only in the fixed constant $D$.
\end{Remark}

At runtime, each incoming parameter value $v$ is subjected to the decision
procedure of Algorithm~\ref{alg:validate}, which implements the
three-zone structure derived from Theorem~\ref{thm:two-sided}.
Notice that $L^{(n,D)}_{\max,i}$ strictly excludes words of depth $>D$ (per Definition~\ref{def:depth-capped}). Therefore, Algorithm~\ref{alg:validate} acts as a strict depth-limiting firewall policy, deliberately rejecting any payload exceeding depth $D$ without needing a runtime stack.
Throughout this algorithm, the DFAs used are the \emph{depth-capped} variants
$\mathcal{A}^{(n,D)}_{\min,i}$ and $\mathcal{A}^{(n,D)}_{\max,i}$ produced
by Algorithm~\ref{alg:build-dfa} with the optional cap $D$.
Using uncapped DFAs would incur the exponential state growth established in
Theorem~\ref{thm:dyck-dfa-size}(ii);
the depth cap keeps every DFA within the
constant-in-$n$ bounds of Table~\ref{tab:state-counts}.

\begin{algorithm}[ht]
\caption{Three-zone parameter validation (depth-capped DFAs)}
\label{alg:validate}
\begin{algorithmic}[1]
\Require Value $v$;
depth cap $D$; pre-built depth-capped DFAs
         $\mathcal{A}^{(n,D)}_{\min,i}$, $\mathcal{A}^{(n,D)}_{\max,i}$
         for depths $n=1,\dots,N_{\max}$ with $N_{\max}\ge D+1$
\Ensure  \textsc{Valid}, \textsc{Invalid}, or \textsc{Undecided}
\For{$n = 1$ \textbf{to} $N_{\max}$}
  \If{$\mathcal{A}^{(n,D)}_{\min,i}$ accepts $v$}
    \State \Return \textsc{Valid}
    \Comment{$v\in L^{(n,D)}_{\min,i}\subseteq L^\ast_i$
             by Theorems~\ref{thm:soundness} and~\ref{thm:depth-capped}(i)}
  \EndIf
  \If{$\mathcal{A}^{(n,D)}_{\max,i}$ rejects $v$}
    \State \Return \textsc{Invalid}
    \Comment{$v\notin L^{(n,D)}_{\max,i}$. This intentionally rejects all inputs (valid or invalid) with depth $> D$ as a strict firewall policy, avoiding runtime stack usage.}
  \EndIf
\EndFor
\State \Return \textsc{Invalid}
\Comment{conservative fallback: exhausted all depths}
\end{algorithmic}
\end{algorithm}

Both acceptance and rejection decisions are formally certified: every
\textsc{Valid} return implies $v\in L^\ast_i$ by Theorem~\ref{thm:soundness};
every \textsc{Invalid} return at depth~$n$ triggered by a legitimate structural violation or maximum depth policy implies
$v\notin L^{(n,D)}_{\max,i}$
by Theorem~\ref{thm:two-sided}(i) and Definition~\ref{def:depth-capped}.
The fallback \textsc{Invalid} on line~9 is a security-conservative policy
choice, not a theorem;
it handles inputs whose depth-capped membership cannot
be resolved within $N_{\max}$ iterations; see the discussion of practical limitations in Section~\ref{sec:discussion}.

\subsection{State Complexity and Feasibility}

Table~\ref{tab:state-counts} collects the DFA state counts for the
depth-capped validator, computed from Theorem~\ref{thm:depth-capped}(ii)
with $k$ bracket pairs and cap $D$.

\begin{table}[H]
\caption{Minimal DFA state counts for the depth-capped pre-filter.\label{tab:state-counts}}
%\small % Change table font size if needed
%\isPreprints{\centering}{} % Only used for preprints
\begin{tabularx}{\textwidth}{p{4.4cm} C C C C C}
\toprule
\textbf{Parameter type} & \textbf{\(k\)} & \textbf{\(D=2\)} & \textbf{\(D=3\)} & \textbf{\(D=4\)} & \textbf{\(D=6\)} \\
\midrule
Integer / identifier & 0 & finite DFA & finite DFA & finite DFA & finite DFA \\
Single-parenthesis expression & 1 & 6 & 8 & 10 & 14 \\
Two-bracket expression & 2 & 11 & 23 & 47 & 191 \\
Three-bracket expression & 3 & 18 & 54 & 162 & 1458 \\
\bottomrule
\end{tabularx}
\noindent{\footnotesize{\textit{Note:} The counts are derived from Theorem~\ref{thm:depth-capped}(ii), for \(n\ge D+1\), using the formula \((k^{D+1}-1)/(k-1)+(k^D-1)/(k-1)+1\). They are independent of the iteration depth \(n\). Here \(k\) denotes the number of bracket pairs, not the number of individual bracket symbols. For \(k=1\), the formula is understood in the limiting sense and gives \(2(D+1)\). The values are for the safe-free counting case \(\Sigma_{\mathrm{safe}}=\emptyset\); a nonempty safe alphabet adds the bounded core-state term described in Remark~\ref{rem:safe-symbols}.}}
\end{table}

For a practical deployment with at most three bracket types and $D=4$,
all pre-filter DFAs have at most $162$ bracket-tracking states (plus the
bounded safe-symbol core states of Remark~\ref{rem:safe-symbols})---small
enough to be loaded into cache and executed in microseconds per input
character. However, it is worth noting that for larger alphabets (e.g., $k \ge 4$) or deeper caps ($D \ge 6$), the state count grows into the thousands, establishing a practical upper bound on the parameters $k$ and $D$ for $O(1)$-memory deployments.

\subsection{Handling Horizontal Repetition (Width vs. Depth)}
The core guarded operator $T_{3}(L)$ intrinsically defines \emph{vertical nesting} (depth) by wrapping constructs sequentially inside bracket pairs (e.g., \texttt{((1))}).
According to Lemma~\ref{lem:dyck-characterisation}, a single guarded operator cannot organically represent strings composed of concatenated sibling blocks, such as \texttt{(1) + (2)}, because the operator acts linearly.
However, recognizing concatenated blocks does not require non-regular memory. Since horizontal repetition involves releasing the memory associated with a fully parsed block, it can be formally addressed by applying the Kleene closure (repetition operator) on top of the depth-capped DFA $\mathcal{A}_{\min}^{(n,D)}$.
By linking the accepting states of the base DFA back to its initial state via the appropriate delimiter symbols (e.g., \texttt{+} or \texttt{AND}), we permit unbounded \emph{horizontal width} without compromising the strict $O(N)$ execution time.

\subsection{Securing Industrial \texttt{SELECT} Queries}
In typical industrial applications, the read-to-write ratio heavily favours \texttt{SELECT} queries.
While destructive actions (\texttt{DROP}, \texttt{DELETE}) represent severe availability threats, the primary concern for \texttt{SELECT} statements is data exfiltration (Data Breaches) via techniques like UNION-based or Blind SQL Injection.
A prominent technique in Blind SQL Injection involves exploiting nested logical conditions and sleep functions (e.g., \texttt{IF(ASCII(SUBSTRING((SELECT password),1,1))=65, SLEEP(5), 0)}).
The structural complexity of these payloads routinely requires nesting depths exceeding the capabilities of a typical application parameter.
By imposing a hard depth cap $D$ on the validation DFA, the fixed-point pre-filter achieves "security by conservative fallback."
Malicious queries employing deep nesting are preemptively rejected on structural grounds before any semantic database evaluation occurs, effectively neutralising complex Blind SQLi payloads.

\subsection{Next-Generation AI WAF Pipeline and NFA Determinisation}
The theoretical limitations of bounding regular expressions can be overcome practically via an automated compilation pipeline.
Rather than manually defining the seed language $S$, an implementation may employ a machine learning or grammar induction component (such as the classical $L^\ast$ algorithm~\cite{Angluin1987}) or a Large Language Model (LLM) component to abstract application query logs into generalized structural templates.
This AI-driven process extracts a baseline Non-deterministic Finite Automaton (NFA) representing legitimate business logic (the "Seed" NFA).
The guarded operators are then algorithmically applied to this seed up to depth $D$.
Finally, the resulting expanded NFA is determinised into a single, unified DFA via the classical subset construction algorithm.
To mitigate the theoretical worst-case exponential state explosion during the NFA-to-DFA compilation, the architecture implements \emph{micro-segmentation}.
Instead of compiling a single monolithic DFA for the entire database, the pipeline generates distinct, minimal DFAs tailored to individual API endpoints.
These endpoint-specific DFAs reside in the fastest layer of processor cache, performing robust structural validation without requiring backtracking or dynamic memory allocation.

% ----------------------------------------------------------------
\section{Discussion, Limitations, and Future Work}
\label{sec:discussion}
% ----------------------------------------------------------------

\subsection{Theoretical Limitations}

\paragraph{Single-operator scope of Theorems~\ref{thm:two-sided}--\ref{thm:depth-capped}.}
Theorems~\ref{thm:two-sided}, \ref{thm:soundness}, \ref{thm:dyck-dfa-size},
and~\ref{thm:depth-capped} are stated and proved for a single guarded operator.
Extension to the system setting of Theorem~\ref{thm:system-fixed-point} is
technically routine---the product ultrametric allows coordinatewise
application of the same arguments---but is not carried out here.
In particular, the DFA size bounds of Theorem~\ref{thm:dyck-dfa-size}
apply to individual components of the system;
the size of a DFA for
the joint fixed point, if it is regular, may be larger due to product
constructions in the minimisation step.

\paragraph{Tightness of Theorem~\ref{thm:dyck-dfa-size}(ii).}
For $k\ge 2$, the exact count
$(k^{n}-1)/(k-1)+(k^{n-1}-1)/(k-1)+1$ is established by classifying valid
prefixes into opening-phase classes (one per opening sequence of length
$<n$) and closing-phase classes (one per pending closer stack of length
$\le n-2$, plus the single completed-chain class).
This relies on the
operator being \emph{linear}: a single chain is the only structure a valid
prefix can encode, so once closing begins no further opening is possible.
The count is exact in the safe-free case $\Sigma_{\mathrm{safe}}=\emptyset$;
a nonempty safe alphabet adds the bounded core-state term of
Remark~\ref{rem:safe-symbols}, and a seed $S$ richer than $\Sigma_{\mathrm{safe}}^\ast$
may introduce additional Myhill--Nerode classes, but the exponential order
in $n$ is unchanged in all cases.

\paragraph{Absence of tight bounds for the system case.}
Theorem~\ref{thm:system-fixed-point} proves existence, uniqueness, and
convergence rate for the joint fixed point, but gives no information about
the size of DFAs recognising the individual component fixed points when they
are regular.
Obtaining such bounds is an open problem.

\paragraph{Modelling fidelity.}
The soundness guarantee of Theorem~\ref{thm:soundness} is relative to the
language $L^\ast$ defined by the chosen guarded operator.
If the operator
over-approximates the set of valid inputs---for example, by omitting a
semantic constraint such as length bounds or character-class restrictions
that are not expressible as guards---then $L^\ast$ will contain strings
that are syntactically valid but semantically undesirable.
The mathematical guarantees do not compensate for an inaccurate grammar model.

\subsection{Limitations of the Depth-Capped Approximation}

\paragraph{Incompleteness for deep nesting.}
By Theorem~\ref{thm:depth-capped}(iv), the depth-capped pre-filter rejects
all valid inputs whose nesting depth exceeds $D$.
In the context of input
validation this means that genuinely valid parameter values with nesting
depth $>D$ will be flagged as invalid.
For SQL injection prevention this
is generally acceptable---real parameter values rarely require nesting
depth beyond $3$ or $4$---but it constitutes a false-rejection class whose
size depends on the distribution of input values in the application.
No formal characterisation of this distribution is given here.

\paragraph{Conservative fallback.}
When Algorithm~\ref{alg:validate} exhausts all depths without reaching a
decision---a situation that can arise for inputs of large length whose
membership in $L^\ast$ cannot be resolved at depth $N_{\max}$---the algorithm
returns \textsc{Invalid} as a security-conservative default.
This fallback is not certified by any of the proved theorems; it is a
policy choice.
An alternative is to forward such inputs to a full
structural parser, but this is outside the formal framework developed here.

\subsection{Limitations of the Application to SQL Injection Prevention}

\paragraph{Encoding-layer attacks.}
The fixed-point framework operates on the structural level of the input
after character decoding.
Encoding-based evasions---Unicode normalisation
variants, percent-encoding, multi-byte character smuggling---act at a layer
below structural validation and are not addressed by the present theory.
Handling them requires a preprocessing step that normalises the encoding
before the pre-filter runs.

\paragraph{Second-order injection.}
The soundness guarantee applies at the point of validation.
In a
second-order injection attack, a value that is structurally valid at the
time of storage is later used in a context where it becomes malicious.
The framework does not model the relationship between storage context and
use context;
extending it to cover second-order injection would require
a two-phase formulation in which both the write and the read validate
against the appropriate fixed point.

\paragraph{Semantic SQL constraints.}
The guarded operator formalism captures syntactic well-formedness of
parameter values but does not model semantic constraints such as type
compatibility, referential integrity, or query semantics.
A complete
validation architecture must layer semantic checks above the fixed-point
pre-filter.

\paragraph{Dynamic query structures.}
In applications that construct query strings programmatically, the
structure of the query may vary at runtime in ways that are difficult to
anticipate by a fixed guarded operator.
The framework is best suited to
applications with a well-defined, static parameter schema.

\subsection{Open Problems}

The following problems are identified as directions for future work.
\begin{enumerate}[label=\textup{(\roman*)}]
  \item Extend Theorems~\ref{thm:two-sided}--\ref{thm:depth-capped} to the
        system setting of Theorem~\ref{thm:system-fixed-point}, with explicit
        DFA size bounds for the joint Picard iterates.
  \item Determine whether the exponential lower bound of
        Theorem~\ref{thm:dyck-dfa-size}(ii) is achievable by a
        polynomial-state DFA if approximate Myhill--Nerode equivalence
        (restricted to suffixes of bounded length) is used in place of
        exact minimisation.
  \item Characterise the class of guarded operators for which the fixed
        point $L^\ast$ is regular, extending the regularity criterion of
        Theorem~\ref{thm:automaton-fixed-point-iff-regular} to the system
        setting.
  \item Develop a formal model of second-order injection within the
        ultrametric framework, providing soundness guarantees at both the
        write and read phases.
  \item Investigate the application of the two-sided convergence framework
        to other domains identified in the discussion---including RNA
        secondary structure prediction, bounded model checking, and
        streaming XML validation---where the three-zone decision structure
        and the certified depth guarantee may provide formal correctness
        properties not available in current approaches.
\end{enumerate}

%%%%%%%%%%%%%%%%%%%%%%%%%%%%%%%%%%%%%%%%%%
\vspace{6pt}

\authorcontributions{The authors contributed equally to this study and are listed in alphabetical order. Conceptualization, A.I., H.K., G.P. and B.Z.; methodology, A.I., H.K., G.P. and B.Z.; formal analysis, A.I., H.K., G.P. and B.Z.; investigation, A.I., H.K., G.P. and B.Z.; writing---original draft preparation, A.I., H.K., G.P. and B.Z.; writing---review and editing, A.I., H.K., G.P. and B.Z. All authors have read and agreed to the published version of the manuscript.}

\funding{The fourth author is partially supported by the Bulgarian National Science Fund (BNSF), Grant number KP-06-N92/1.} 

\dataavailability{No new data were created or analyzed in this study.}

\acknowledgments{}

\conflictsofinterest{The authors declare no conflicts of interest.}

\reftitle{References}

\end{document}